\newtheorem{theorem}{Theorem}[section]
\newtheorem{lemma}{Lemma}[section]
\newtheorem{corollary}{Corollary}[section]
\newtheorem{example}{Example}[section]
\newtheorem{assumption}{Assumption}
\newenvironment{proof}{\paragraph{Proof:}}{\hfill$\square$}
\newcommand{\iid}{\stackrel{iid}{\sim}}
\newcommand{\cip}{\stackrel{p}{\rightarrow}}
\newcommand{\cid}{\stackrel{d}{\rightarrow}}
\newcommand{\E}{\mathbb{E}}
\newcommand{\var}{\text{var}}
\newcommand{\cov}{\text{cov}}
\newcommand{\cor}{\text{cor}}
\newcommand{\R}{\mathbbm{R}}
\newcommand{\1}{\mathbbm{1}}
\newcommand{\indep}{\raisebox{0.05em}{\rotatebox[origin=c]{90}{$\models$}}}
\definecolor{shadecolor}{gray}{0.9}
\tikzset{every picture/.style={line width=0.75pt}} 
\newcolumntype{a}{>{\columncolor{lgray}c}}
\newlist{Step}{enumerate}{2}
\setlist[Step]{label={{Step \arabic*.}}, leftmargin=*}
\newcommand\circled[1]{%
  \mathpalette\@circled{#1}%
}
\newcommand\@circled[2]{%
  \tikz[baseline=(math.base)] \node[draw,circle,inner sep=2pt] (math) {$\m@th#1#2$};%
}
\newcommand\circledblue[1]{%
  \mathpalette\@circledblue{#1}%
}
\newcommand\@circledblue[2]{%
  \tikz[baseline=(math.base)] \node[draw,circle, fill=blue!20, inner sep=2pt] (math) {$\m@th#1#2$};%
 }
\renewenvironment{abstract}
 {\begin{center}\normalsize\textsc{Abstract}%
 \end{center}\begin{quote}\normalsize}
 {\end{quote}}
\title{Design Sensitivity and Its Implications for Weighted Observational Studies\thanks{The authors would like to thank Licong Lin, Erin Hartman, Avi Feller, and Kevin Guo for their helpful comments and feedback. Melody Huang is supported by the National Science Foundation Graduate Research Fellowship under Grant No. 2146752, Dan Soriano is supported by the National Science Foundation under Grant No. 1745640, and Samuel D. Pimentel is supported by the National Science Foundation under Grant No. 2142146. Any opinion, findings, and conclusions or recommendations expressed in this material are those of the authors(s) and do not necessarily reflect the views of the National Science Foundation.}}
\author{Melody Huang\thanks{Harvard University, USA (\texttt{melodyhuang@fas.harvard.edu})}, Dan Soriano\thanks{University of California, Berkeley, USA (\texttt{dan\_soriano@berkeley.edu})}, and Samuel D. Pimentel\thanks{University of California, Berkeley, USA (\texttt{spi@berkeley.edu})}}
\date{\today}
\begin{document}

\maketitle 

\begin{abstract}
Sensitivity to unmeasured confounding is not typically a primary consideration in designing treated-control comparisons in observational studies. We introduce a framework for researchers to optimize robustness to omitted variable bias at the design stage using a measure called design sensitivity. Design sensitivity describes the asymptotic power of a sensitivity analysis, and transparently assesses the impact of different estimation strategies on sensitivity.  We apply this general framework to two commonly-used sensitivity models, the marginal sensitivity model and the variance-based sensitivity model. By comparing design sensitivities, we interrogate how key features of weighted designs, including choices about trimming of weights and model augmentation, impact robustness to unmeasured confounding, and how these impacts may differ for the two different sensitivity models. We illustrate the proposed framework on a study examining drivers of support for the 2016 Colombian peace agreement.

\textbf{Keywords:} causal inference, confounding, observational study, propensity score, sensitivity analysis, weighting.
\end{abstract}

\clearpage

\begingroup
\allowdisplaybreaks
\doublespacing 
\section{Introduction}
Increasingly, observational studies are being used to answer causal questions in the social and biomedical sciences. Estimating causal effects in observational settings often requires an assumption that unmeasured confounding is absent. In practice, this assumption is not testable and often untenable. Recent literature has introduced sensitivity analyses to assess the potential impact of an unobserved confounder on a study’s results \citep{zhao2019sensitivity, soriano2021interpretable, jin2022sensitivity, ishikawa2023kernel}. However, sensitivity analysis remains underutilized in practice, and sensitivity to unmeasured confounding is not typically a primary consideration in designing the treated-control comparison.

Given that findings from observational studies cannot be viewed as reliable unless a sensitivity analysis demonstrates some robustness to unmeasured confounding, it is natural to approach the design of an observational study with unmeasured confounding in mind.  In the following paper, we introduce a measure called \textit{design sensitivity} for weighted observational studies. Design sensitivity describes the asymptotic power of a sensitivity analysis and can be computed prior to carrying out a study.  Computing and comparing design sensitivities across  possible weighted designs allows researchers to optimize for robustness to unmeasured confounding rather than treating sensitivity analysis as a post hoc secondary analysis.

The paper provides several key methodological contributions. First, while design sensitivity has been studied for matched observational studies  \citep[e.g.,][]{rosenbaum2004design, rosenbaum2015bahadur, howard2021uniform}, the notion of design sensitivity has not yet been introduced to the weighting literature. By introducing design sensitivity for weighted estimators, we interrogate how key features of weighted designs impact robustness to unmeasured confounding. In particular, we examine two design choices unique to weighted estimators---augmentation and trimming---and formalize how they impact the robustness of weighted estimators to unmeasured confounding. 

Second, we derive the existence of design sensitivity for a \textit{broad class of sensitivity models} for weighted estimators. This means that under our proposed framework, researchers can construct the design sensitivity for a wide variety of sensitivity models with only mild regularity conditions required. The generality of the framework allows researchers to flexibly estimate the design sensitivity for the sensitivity models of their choice. This is especially advantageous, given the variety of different approaches that have been proposed for performing sensitivity analysis for weighted estimators. We illustrate the framework by deriving closed form representations for the design sensitivities of two common sensitivity models: the marginal sensitivity model and the variance-based sensitivity model. These closed form representations allow researchers to mathematically examine how design choices affect different sensitivity models. 

The paper is structured as follows. In Section 2, we introduce the notation and background. We formalize the notion of a general sensitivity model for weighted estimators, and review existing sensitivity approaches. In Section 3, we introduce the idea of power for a sensitivity model and design sensitivity. In Section 4, we compare the design sensitivity for different design choices. Section 5 illustrates the proposed framework on an empirical analysis examining the drivers of support for the FARC peace agreement. Section 6 concludes. 

\section{Background}

\subsection{Set-Up, Notation, and Assumptions}
We consider an observational study of $n$ units sampled identically and independently from an infinite population. Define $Z$ as a treatment indicator, where $Z = 1$ if a unit is in the treatment group, and 0 otherwise. Furthermore, define $Y(1)$ and $Y(0)$ as the potential outcomes under treatment and control, respectively. Throughout, we will make the  stable unit treatment value assumption (SUTVA)--i.e., no interference or spillovers, such that the observed outcomes $Y_i$ can be written as $Y = Y(1) \cdot Z + Y(0) \cdot (1-Z)$ \citep{rubin1980SUTVA}. 

In randomized trials, researchers determine the probability of assignment to treatment or control for each individual, but in observational studies, propensities for treatment may co-vary with unobserved potential outcomes.  To permit unbiased estimation of treatment effects, researchers must measure all background covariates describing common variation in treatment and potential outcomes, as formalized in the following assumption.

\begin{assumption}[Conditional Ignorability of Treatment Assignment] \label{assumption:ignorability} For some vector of pre-treatment covariates $\widetilde{X} \in \widetilde{\mathcal{X}}$: 
$$Y(1), Y(0) \ \indep \ Z \mid \widetilde{X}.$$
\end{assumption} 
This assumption, also known as \textit{selection on observables}, requires that given pre-treatment covariates $\widetilde{X}$, treatment is `as-if' random. In addition to conditional ignorability, treatment effect estimation generally requires overlap, meaning that all units have a non-zero probability of being treated. 

\begin{assumption}[Overlap] \label{assumption:overlap} For units $i \in 1, ..., n$ and $x \in \widetilde{\mathcal{X}}$, $0 < \Pr(Z = 1 \mid \widetilde{X}= x) < 1.$
\end{assumption} 

Our estimand of interest is the \textit{average treatment effect across the treated} (i.e., ATT): 
$$\tau := \E\left[Y(1) - Y(0) \mid Z = 1\right],$$
where the expectation is taken with respect to the population. The proposed framework can be extended for settings in which researchers are interested in the average treatment effect (ATE), as well as other common missingness settings such as external validity and survey non-response \citep{huang2022sensitivity, hartman2022sensitivity}.

 A common approach to estimating the ATT is by using weighted estimators:
\begin{align} \label{eq:att_est}
  \hat \tau(\hat w) := \frac{1}{\sum_{i=1}^n Z_i} \sum_{i=1}^n Y_i Z_i - \frac{\sum_{i=1}^n \hat w_i Y_i (1-Z_i)}{\sum_{i=1}^n \hat w_i (1-Z_i)}.  
\end{align}
The weights $\hat w_i$ are chosen so that the re-weighted distribution of pre-treatment covariates $\widetilde{X}$ across the control units matches the distribution of $\widetilde{X}$ across the treated units; for example, the population-level inverse propensity score weights  guarantee this. Weights must be estimated. A common approach is to fit a propensity score model to estimate a unit's probability of treatment given the pre-treatment covariates and use the fitted values to construct estimated inverse weights. An alternative approach uses balancing weights, which solve an optimization problem that selects weights to balance sample moments, thereby bypassing a need to estimate a propensity model.   For examples of such balancing weights and additional methods and theory, see \citet{hainmueller2012entropy, zubizarreta2015stable, ben2021balancing}. As shown by \citet{chattopadhyay2021implied}, certain regression estimators can also be represented in the form of weighting estimators if weights are allowed to take on negative values. We note that the theoretical framework introduced in Section \ref{sec:des_sens} can be easily extended to accommodate negative weights, though we focus on more familiar positive-weights settings for ease of exposition. 

When the full set of covariates $\widetilde{X}$ in Assumption \ref{assumption:ignorability} are observed and the weights are correctly specified, the weighted estimator is consistent and unbiased for the ATT. However, in practice, it is impossible to know whether or not all confounders have been measured. Omitted confounders lead to biased estimates. In what follows, we consider the setting in which the full vector of covariates is defined as $\widetilde{X} :=\{X, U\}$, where $X \in \mathcal{X}$ is observed and measured across all units, but $U \in \mathcal{U}$ is unobserved. As such, the estimated weights $\hat w_i$ are functions of  $X$ alone. We assume that the estimated weights $\hat w$ converge in probability to the population weights $w:= \Pr(Z = 1 \mid X)/\Pr(Z = 0 \mid X)$ that condition on $X$ alone. We also define the ideal weights $w^*: = \Pr(Z = 1 \mid X, U)/\Pr(Z = 0 \mid X, U)$ as the population-level inverse propensity score weights in $(X,U)$. Were researchers to use the ideal weights $w^*$, they would consistently recover the ATT. Finally, we define the population weighted estimate $\tau(w) = \mathbb{E}(Y|Z=1)-
{\mathbb{E}(wY|Z=0)}$. We define $\tau(w^*)$ similarly, noting that by construction this quantity is identical to the ATT.

We note that the framework just presented accommodates  balancing weights that converge asymptotically to inverse propensity score weights \citep{ben2021balancing}. Researchers may also relax the assumption of correct specification in settings where specification concerns can be formulated as an omitted variable problem (see \citealp{huang2022sensitivity} and \citealp{hartman2022sensitivity} for more discussion). 

\subsection{Review: Sensitivity Analyses for Weighted Estimators}
Sensitivity analyses allow researchers to assess their study findings' robustness to varying degrees of violation of underlying assumptions. We define a \textit{sensitivity model} as a set of ideal weight vectors $w^*$ over which we are interested in conducting a worst-case analysis. Informally, a sensitivity model consists of all possible $w^*$ within a local neighborhood of the weights ${w}$, where the neighborhood is described by a specified error structure and indexed by a parameter that can be chosen to make the neighborhood larger or smaller. 

The ideal weights $w^*$ give a mapping $w^*(x,u)$ from $\{X, U\}$ to a real-valued weight. Formally, a sensitivity model $\nu(\Gamma, {w})$ is a set of such mappings $\{w^*(x,u): f_\nu(w^*(X,U),{w}(X)) \leq \Gamma\}$, where the function $f_\nu$ measures dissimilarity between two probability distributions (in this case, the distributions induced by the random covariate vector $\mathcal{X}$ under $w(X)$ and  $w^*(X,U)$).
 
$\Gamma \in \mathbb{R}$ is a parameter constraining the overall dissimilarity allowed. Larger $\Gamma$ values allow for larger deviations from ${w}$, and hence, more unobserved confounding.  For any particular choice of $\nu$ and $\Gamma$, we define the interval of possible values for the true ATT:
\begin{equation} 
\left[ L_{\nu(\Gamma, {w})}, U_{\nu(\Gamma, {w})} \right] := \left[ \inf_{\tilde w \in \nu(\Gamma, w)}  \tau(\tilde w), \sup_{\tilde w \in \nu(\Gamma, w)}  \tau(\tilde w)\right].
\label{eqn:partial_id_region} 
\end{equation} 
We refer to this quantity as the \emph{effect interval}.

When assessing whether unobserved confounding is sufficiently strong to overturn a research conclusion, there are two sources of error for which we must account. The interval $\left[ L_{\nu(\Gamma, {w})}, U_{\nu(\Gamma, {w})} \right]$ bounds the first source of error, the bias arising from the omitted confounders. With an infinite number of samples from the population, this would be the only source of error.  However, in practice, we work with estimated weights $\hat{w}$ instead of the population weights $w$, which produce  a noisy approximation $\left[ \hat{L}_{\nu(\Gamma, {w})}, \hat{U}_{\nu(\Gamma, {w})} \right]$ to the effect interval and the resulting sampling variability provides another source of error. Therefore, it is typically necessary to construct a bias-aware confidence interval $CI_{\nu(\Gamma, {w})}(\alpha) \supseteq \left[ \hat{L}_{\nu(\Gamma,{w})}, \hat{U}_{\nu(\Gamma, {w})} \right]$ that contains any true parameter $\tau(\widetilde{w}) \in \left[ L_{\nu(\Gamma, w)}, U_{\nu(\Gamma, w)} \right]$ with probability at least $1-\alpha$.  

Recent literature on sensitivity considers the notion of sharpness, which relates to whether $CI_{\nu(\Gamma, {w})}(\alpha)$ converges in large samples to the effect interval \citep{dorn2021sharp}.\footnote{\citet{dorn2021sharp} do not use the term effect interval; however, their partially identified region is identical to the effect interval for the sensitivity model they consider, and in general the two coincide whenever the partially identified region is an interval.} Of the sensitivity models we consider in detail below, one (the marginal sensitivity model) is known not to be sharp without considering additional constraints, and the other (the variance-based model) has not been proven sharp; as such the specific numeric bounds discussed below will tend to recover a superset of the effect interval in large samples rather than the exact effect interval. However, our results below will apply whether or not a particular method for sensitivity analysis is sharp. 

For a study with a nominally significant result, a \textit{sensitivity analysis} is conducted by searching over values of $\Gamma$, repeating the test for the hardest-to-reject value of $w^*$ in $\nu(\Gamma, w)$, and finding the largest value $\Gamma^*$ for which it is still possible to reject the null.

If $\Gamma^*$ is small, then the initial finding is sensitive to a small amount of unobserved confounding. If $\Gamma^*$ is large, only a strong unobserved confounder could explain the results under a true null hypothesis.

Many different approaches to constructing sensitivity models using different specifications of $f_\nu$ have been proposed, including restricting various $L^p$-norms of the ratio $w^*/w$ and its image under convex functions \citep[e.g.,][]{zhao2019sensitivity, zhang2022bounds, huang2022sensitivity, jin2022sensitivity}. Given the rich and developing literature on different sensitivity models, one key contribution of our proposed method is the flexibility to be applied to \textit{any} sensitivity model that meets a set of relatively weak regularity conditions. However,  for illustrative purposes, we will discuss two common sensitivity models: the variance-based sensitivity model of \citet{huang2022variance} and the marginal sensitivity model of \citet{tan2006distributional}  and \citet{zhao2019sensitivity}.

\subsection{Example: The Variance-based Sensitivity Model}
The variance-based sensitivity model $\nu_{vbm}(R^2, w)$ constrains the variance in $w^*$ not explained by $w$ \citep{huang2022variance}. More formally, for some $R^2 \in [0, 1)$, the variance-based sensitivity model is defined as follows:
\begin{equation}
    \nu_{vbm}(R^2, w) := \left\{ w^*:
    1 \leq \frac{\var(w^* \mid Z = 0)}{\var(w \mid Z = 0)} \leq \frac{1}{1-R^2} \right\}. \label{eq:vbm}
\end{equation}
We characterize the effect interval under the variance-based model via a bias bound, defined for a set $\nu_{vbm}(R^2, w)$ as: 
\begin{align} 
\max&_{\tilde w \in \nu_{vbm}(R^2, w)} \text{Bias}( \tau(w) \mid \tilde w)\nonumber \\ 
&\leq \sqrt{1-\cor(w, Y \mid Z = 0)^2}\sqrt{\frac{R^2}{1-R^2} \var(Y \mid Z = 0) \var(w \mid Z = 0)}
\label{eqn:optim_bias}
\end{align} 
where $\text{Bias}(\tau(w) \mid \tilde w) = \tau(w) - \tau(\tilde w)$. For a fixed $R^2$ value, researchers can estimate the other quantities in Equation \eqref{eqn:optim_bias} using observed sample analogues. We can then construct an interval that contains the effect interval $\left[L_{\nu_{vbm}(R^2, w)}, U_{\nu_{vbm}(R^2, w)}\right]$: 
\begin{equation}
\left[ \tau(w) \pm \sqrt{1-\cor(w, Y \mid Z = 0)^2}\sqrt{\frac{R^2}{1-R^2} \var(Y \mid Z = 0) \var(w \mid Z = 0)} \,\,
\right]. 
\label{eqn:vbm_bias_interval}
\end{equation}

In essence, the variance-based sensitivity model constrains a weighted $L_2$ distance between the ideal weights $w^*$ and the weights $w$ \citep{huang2022variance}, so it is especially relevant in settings in which researchers are comfortable reasoning about the average degree of unobserved confounding across subjects. A percentile bootstrap approach proposed originally in \citet{zhao2019sensitivity} is used in concert with the bias bound to account for sampling variability, creating confidence intervals for the ATT that remain valid even in the presence of confounding under the sensitivity model.

\subsection{Example: The Marginal Sensitivity Model}
\label{subsec:ex_msm}
The marginal sensitivity model $\nu_{msm}(\Lambda, w)$ constrains the worst-case error from omitting a confounder, positing that the ratio between any ideal weight $w^*$ and corresponding $w$ may not exceed $\Lambda \geq 1$ \citep{tan2006distributional, zhao2019sensitivity}. 
\begin{equation}
    \nu_{msm}(\Lambda, w) := \left\{w^*:
     \Lambda^{-1} \leq \frac{w^*}{w} \leq \Lambda 
    \right\}. \label{eq:msm}
\end{equation}

The extrema $[L_{\nu_{msm}(\Lambda, w)}, U_{\nu_{msm}(\Lambda, w)}]$ can be computed using linear programming \citep{zhao2019sensitivity}.  In contrast to the variance-based sensitivity model, the marginal sensitivity model is most useful when the researcher is comfortable reasoning about the maximal degree of confounding for any given subject.  The percentile bootstrap approach of \citet{zhao2019sensitivity} is again used to account for sampling variability.  \cite{dorn2021sharp} introduced alternative approaches to obtain sharp limiting sets of point estimates under the marginal sensitivity model, which require fitting quantile regressions, but we focus on the simpler approach of \citet{zhao2019sensitivity} here.

\subsection{From Sensitivity Analysis to Design Sensitivity}
Sensitivity analyses provide valuable information and recent innovations have improved their interpretability and utility \citep{ding2016sensitivity, Cinelli2020, soriano2021interpretable}.  However, they are underutilized in practice \citep{vanderweele2017sensitivity, hazlett2023unconfounded}. One fundamental drawback is that sensitivity analysis is conducted post-hoc, as a secondary analysis.  If an estimated result is found to be easily overturned by a relatively weak confounder, there is little that the researcher can do. Returning to the analysis and altering the estimation approach to mitigate sensitivity to an omitted confounder after conducting sensitivity analysis may introduce bias; this practice violates the `design principle,’ which forbids consultation of in-sample outcomes during study design \citep{rubin2007design}. 

We propose a design tool, \emph{design sensitivity} for weighted estimators, that enables researchers to plan ahead and tailor studies to improve robustness to unmeasured confounding. In brief, design sensitivity characterizes the power of a sensitivity analysis in large samples. Comparing design sensitivities across different estimation approaches and study specifications provides insight into the implications of those choices for robustness to unmeasured bias, much as power calculations provide insight into design choices' impacts on precision in randomized studies. Design sensitivity in this formal sense was introduced by \citet{rosenbaum2004design} and has been explored extensively for matched studies \citep{heller2009split, rosenbaum2010design, hsu2013effect}. Implications include the value of reducing heterogeneity within matched pairs \citep{rosenbaum2005heterogeneity} and choosing test statistics \citep{rosenbaum2011aspects, howard2021uniform} and  treatment doses \citep{rosenbaum2004design} carefully. However, design sensitivity has, to our knowledge, not been introduced for weighted estimators. This is the focus of our paper. In Section 3, we introduce design sensitivity for weighting estimators. We introduce the notion of power for sensitivity analyses for weighted estimators, which allows us to generally define design sensitivity for a broad class of sensitivity models. We then derive the design sensitivity for both the variance-based sensitivity models and the marginal sensitivity models. In Section 4, we explore the implications of important design choices on design sensitivity.

\section{Design Sensitivity for Weighted Estimators}
\label{sec:des_sens}
\subsection{Power of a Sensitivity Analysis and Design Sensitivity}
 
Design sensitivity is closely related to familiar notions of statistical power.  The power of a test is the probability of rejecting a null hypothesis when a specific alternative is instead true; the alternative is typically chosen to reflect a ``favorable" situation in which the effect of interest is present and important identifying assumptions are met \citep[\S 15]{rosenbaum2010designbook}.  The power of a sensitivity analysis, for a null hypothesis of no treatment effect, a given sensitivity model $\nu(\Gamma, w)$, and a value $\Gamma_0$ of the sensitivity parameter, is the probability that $CI_{\nu(\Gamma_0,w)}$ 
excludes zero under a \textit{favorable} situation, meaning (1) the study is free of unmeasured bias, and in addition (2) a null hypothesis of no treatment effect is false, with potential outcomes generated by a specific stochastic model.

While it may seem paradoxical to evaluate a sensitivity analysis when no unmeasured confounding is present, in practice, researchers do not know whether unmeasured confounding is present and must conduct a sensitivity analysis anyway. Power helps determine when true effects can be detected, even when the test allows for some confounding. Power depends on the potential outcome distribution; as a result, researchers must specify a hypothesized model for the potential outcomes and treatment effects in order to compute power in advance of data analysis. When outcome data is available from a planning sample, 
it can be used to calibrate the hypothetical outcome distribution (see Appendix \ref{app:planning_sample}).

We now provide a convenient normal approximation to the power of a sensitivity analysis, derived originally in \citet{rosenbaum2004design}.  Throughout the paper, we will assume without loss of generality that the specified alternative has a positive treatment effect ($\tau > 0$).

\begin{theorem}[Power of a Sensitivity Analysis ]\label{thm:power}
Let $\hat \tau(\hat w)$ be a standard weighted estimator (i.e., Equation \eqref{eq:att_est}), and for a sensitivity model $\nu(\Gamma, w)$ let 
$\tau_{\nu(\Gamma, w)} := \inf_{\tilde w \in \nu(\Gamma, w)} \tau(\tilde w)$ and $\xi_{\nu(\Gamma, w)} := \tau(w) - \tau_{\nu(\Gamma,w)}$. Finally, let $k_\alpha := 1- \Phi(\alpha)$. Then if $\hat \tau(\hat w)$ is asymptotically normal, the power of a sensitivity analysis is given as follows:  
\begin{align} 
\Pr\left( \frac{\sqrt{n}(\hat \tau(\hat w) - \xi_{\nu(\Gamma, w)}) }{ \sigma_{\nu(\Gamma, w)}} \geq k_\alpha\right) 
&= \Pr\left( \frac{\sqrt{n} ( \hat \tau(\hat w) - \tau(w))}{ \sigma_{W}} \geq \frac{k_\alpha  \sigma_{\nu(\Gamma, w)} + \sqrt{n} (\xi_{\nu(\Gamma, w)}-\tau(w) )}{\sigma_{W}}\right) \nonumber \\
&\simeq 1 - \Phi \left( \frac{k_\alpha \sigma_{\nu(\Gamma, w)} + \sqrt{n} (\xi_{\nu(\Gamma, w)} - \tau(w))}{\sigma_{W}} \right),
\label{eqn:bias_dist} 
\end{align} 
where $\sigma_{W}$ and $\sigma_{\nu(\Gamma, w)}$ represent the variance of $\tau(w)$ and $\tau_{\nu(\Gamma,w)}$ respectively. 
\end{theorem}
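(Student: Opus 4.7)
The plan is to observe that the power equals the probability of the sensitivity-aware test's rejection event, and that this probability admits a normal approximation via the assumed asymptotic normality of $\hat\tau(\hat w)$. The bias-aware confidence interval excludes zero exactly when $\sqrt{n}(\hat\tau(\hat w) - \xi_{\nu(\Gamma,w)})/\sigma_{\nu(\Gamma,w)} \geq k_\alpha$, which produces the starting expression displayed on the first line of the theorem.

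The first step is an algebraic rearrangement of the rejection event to expose the centered and rescaled version of the estimator. Multiplying the inequality by $\sigma_{\nu(\Gamma, w)}$, subtracting $\sqrt{n}\,\tau(w)$ from both sides, and then dividing by $\sigma_W$ converts the rejection event into
$$\frac{\sqrt{n}\bigl(\hat\tau(\hat w) - \tau(w)\bigr)}{\sigma_W} \geq \frac{k_\alpha\,\sigma_{\nu(\Gamma, w)} + \sqrt{n}\bigl(\xi_{\nu(\Gamma, w)} - \tau(w)\bigr)}{\sigma_W},$$
which reproduces the middle equality of the theorem. The second step invokes the assumed asymptotic normality of $\hat\tau(\hat w)$: under the favorable situation (no unmeasured confounding, so $\tau(w) = \tau(w^*) = \tau$), $\hat\tau(\hat w)$ is centered at $\tau(w)$ with limiting variance $\sigma_W^2$, so the left-hand side converges in distribution to a standard normal. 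Applying $1 - \Phi$ to the deterministic right-hand threshold yields the claimed expression.

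There is no real mathematical obstacle here, since each step follows immediately from the hypotheses; the work is essentially bookkeeping. The one conceptual point worth emphasizing is the asymmetric role of the two variances: $\sigma_W^2$ is the asymptotic variance of $\hat\tau(\hat w)$ under the favorable data-generating process, and for this reason appears in the denominator used to standardize the estimator into an approximate $N(0,1)$; by contrast, $\sigma_{\nu(\Gamma, w)}^2$ is the worst-case variance baked into the bias-aware test, so it appears (together with the bias offset $\xi_{\nu(\Gamma, w)}$) only inside the critical threshold on the right-hand side. The approximation $\simeq$ rather than equality reflects that the CLT for $\hat\tau(\hat w)$ holds only in the limit; the resulting expression is an asymptotic approximation to the power that becomes exact as $n \to \infty$ and, when rescaled appropriately, motivates the design sensitivity definition introduced subsequently.
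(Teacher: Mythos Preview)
Your proposal is correct and follows essentially the same approach as the paper's proof: an algebraic rearrangement of the rejection inequality to isolate the standardized quantity $\sqrt{n}(\hat\tau(\hat w)-\tau(w))/\sigma_W$, followed by an appeal to the assumed asymptotic normality to replace the probability by $1-\Phi(\cdot)$. Your additional commentary on the asymmetric roles of $\sigma_W$ and $\sigma_{\nu(\Gamma,w)}$ is accurate and clarifying, but the core argument is identical to the paper's.
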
 
The $\simeq$ symbol indicates asymptotic equivalence, meaning that the quantities on the two sides converge to the same limit as $n \rightarrow \infty$.  For regularity conditions sufficient to guarantee that $\hat \tau(\hat w)$ is asymptotically normal, see Assumption \ref{assump:regularity_conds} in Appendix \ref{app:pwr_sens}.

Expression  \eqref{eqn:bias_dist} reveals helpful patterns for large $n$.  The numerator of the fractional term includes a variance term that is stable across sample sizes, as well as a bias term that grows with $n$ and will dominate the formula in large samples.  The sign of the bias term $\xi_{\nu(\Gamma, w)} - \tau(w)$ is thus highly consequential, determining whether asymptotic power will be very large or very small. For a given distribution of weights $w$, increasing $\Gamma$ eventually leads to a shift from a high-power to a low-power regime. Design sensitivity characterizes this important phase transition.

\begin{theorem}[Design Sensitivity]\label{thm:ds} 
For a sensitivity model $\nu(\Gamma, w)$ where $\sigma_{\nu(\Gamma, w)} < \infty$ and a given favorable situation, let the \emph{design sensitivity} be any value $\widetilde{\Gamma}$ such that the following two conditions hold, where $\beta_{\nu(\Gamma, w)}$ is the power of the sensitivity analysis:
    \[\beta_{\nu(\Gamma, w)} \longrightarrow 1 \quad \forall\,\, \Gamma < \widetilde{\Gamma} \quad \quad \text{and} \quad \quad 
  \beta_{\nu(\Gamma, w)} \longrightarrow 0 \quad \forall\,\, \Gamma > \widetilde{\Gamma}.\]
Then $\widetilde{\Gamma}$ is given by solving the estimating equation $\xi_{\nu(\Gamma, w)} - \tau(w) = 0$ for $\Gamma$.
\end{theorem}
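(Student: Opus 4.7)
The plan is to substitute the asymptotic power expression of Theorem~\ref{thm:power} into the definition of $\widetilde{\Gamma}$ and isolate the mechanism driving the transition between the high- and low-power regimes as $\Gamma$ varies. Writing the argument of $\Phi(\cdot)$ in \eqref{eqn:bias_dist} as
\[
A_n(\Gamma) := \frac{k_\alpha\,\sigma_{\nu(\Gamma,w)}}{\sigma_W} + \frac{\sqrt{n}\bigl(\xi_{\nu(\Gamma,w)} - \tau(w)\bigr)}{\sigma_W},
\]
the finiteness assumption $\sigma_{\nu(\Gamma,w)} < \infty$ forces the first summand to be $O(1)$ for each fixed $\Gamma$, while the second is $O(\sqrt{n})$ with sign matching $\operatorname{sgn}(\xi_{\nu(\Gamma,w)} - \tau(w))$. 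The bias term therefore dominates asymptotically, so $A_n(\Gamma) \to -\infty$ whenever $\xi_{\nu(\Gamma,w)} < \tau(w)$ (giving $\beta_{\nu(\Gamma,w)} \to 1$), and $A_n(\Gamma) \to +\infty$ whenever $\xi_{\nu(\Gamma,w)} > \tau(w)$ (giving $\beta_{\nu(\Gamma,w)} \to 0$).

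I would then translate this algebraic sign condition into one on $\Gamma$ using the structure of $\xi$. Since $\xi_{\nu(\Gamma,w)} = \tau(w) - \tau_{\nu(\Gamma,w)}$, the two regimes correspond to $\tau_{\nu(\Gamma,w)} > 0$ and $\tau_{\nu(\Gamma,w)} < 0$, respectively. Standard sensitivity models such as $\nu_{vbm}$ and $\nu_{msm}$ are nested, satisfying $\nu(\Gamma,w) \subseteq \nu(\Gamma',w)$ whenever $\Gamma \leq \Gamma'$, so $\tau_{\nu(\Gamma,w)} = \inf_{\tilde w \in \nu(\Gamma,w)} \tau(\tilde w)$ is non-increasing in $\Gamma$. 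In the favorable situation where $\tau(w) = \tau > 0$, it starts strictly positive at the baseline value of $\Gamma$ (where $\nu(\Gamma,w) = \{w\}$) and eventually takes non-positive values for large $\Gamma$. The crossing point is exactly the $\widetilde{\Gamma}$ solving $\xi_{\nu(\widetilde{\Gamma},w)} - \tau(w) = 0$, and by the previous paragraph it sits precisely between the two limiting regimes, matching both defining conditions of the design sensitivity.

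The main obstacle is not algebraic---the sign analysis is short---but concerns validating the asymptotic approximations. The $\simeq$ in Theorem~\ref{thm:power} hides asymptotic normality of $\hat\tau(\hat w)$ and convergence of the standard-error ratio, and one must verify these hold with enough uniformity in $\Gamma$ near $\widetilde{\Gamma}$ that the sign of $A_n(\Gamma)$ alone determines the power limit; this is where the regularity conditions of Appendix~\ref{app:pwr_sens} do the real work. A secondary technical point is the monotonicity (and near-$\widetilde{\Gamma}$ continuity) of $\tau_{\nu(\Gamma,w)}$, which I would either impose as a mild regularity condition on the class of sensitivity models or verify directly for the two running examples $\nu_{vbm}$ and $\nu_{msm}$ so that a genuine zero-crossing exists.
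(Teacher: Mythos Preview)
Your proposal is correct and matches the paper's approach: the paper simply states that the theorem ``is a direct consequence of Equation~\eqref{eqn:bias_dist},'' and your sign analysis of $A_n(\Gamma)$ is exactly that consequence spelled out. If anything, your treatment is more careful than the paper's, since you explicitly flag the monotonicity of $\tau_{\nu(\Gamma,w)}$ in $\Gamma$ and the regularity needed for a genuine crossing point, which the paper leaves implicit.
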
 
The theorem is a direct consequence of Equation \eqref{eqn:bias_dist}. In short, the design sensitivity delineates the maximum amount of unobserved confounding under which the effect can still be detected given a sufficiently large sample. It is often a more useful quantity than the power, which must typically be calculated separately for many values of $\Gamma$ (since a single compelling value of $\Gamma$ cannot often be identified in advance); design sensitivity does not require either a $\Gamma$-value or sample size to be specified.
Through the $\xi_{\nu(\Gamma,w)}$ and $\tau(w)$ terms, design sensitivity depends on the weights selected by the researcher, and evaluating design sensitivities provides a natural basis on which to compare different design specifications.  As will be shown, analytical solutions for the design sensitivity in specific sensitivity models also highlight general principles about the factors that most influence a study's robustness to unobserved confounding.

\subsection{Design Sensitivity in the Variance-based Sensitivity Model}
In the variance-based sensitivity model, the bias term $\xi_{\nu_{vbm}(R^2, w)}$ defined in Theorem  \ref{thm:power} is equal to the bias bound from Equation \eqref{eqn:optim_bias}. This follows immediately from the fact that $\xi_{\nu(\Gamma, w)}$ is generally defined as the difference between the estimates $\tau(w)$ and the minimum value (i.e., $L_{\nu(\Gamma, w)}$). As such, $\xi_{\nu_{vbm}(R^2, w)} := \tau(w) - L_{\nu_{vbm}(R^2, w)} = \tau(w) - \big(\tau(w) - \max_{\tilde w \in \nu_{vbm}(R^2,w)} \text{Bias}(\tau(w) \mid \tilde w) \big) = \max_{\tilde w \in \nu_{vbm}(R^2,w)} \text{Bias}(\tau(w) \mid \tilde w)$. Using Equation \eqref{eqn:optim_bias}, we derive a closed form for the design sensitivity $\widetilde{R}^2$. 
\begin{theorem}[Design Sensitivity for Variance-Based Models]\label{thm:design_sensitivity_vbm} 
Define $\tilde R^2$ as
$$\tilde R^2 := \frac{a^2}{1+a^2} \ \ \ \text{where } a^2 = \frac{1}{1-\cor(w, Y \mid Z = 0)^2} \cdot \frac{ \tau(w)^2}{\var(w \mid Z = 0) \cdot \var(Y \mid Z = 0)},$$
where covariances and variances are computed under the favorable situation.  Then, if the weighted estimator $\hat \tau (\hat w)$ is asymptotically normal, 
\begin{align*}
        & \Pr\left(0 \notin \left[ L_{\nu_{vbm}(R^2, w)}, U_{\nu_{vbm}(R^2, w)} \right] \right) \to 1 \text{ as } n \to \infty \text{ for } R^2 < \Tilde{R^2} \\
       \text{and} \quad & \Pr\left(0 \notin \left[ L_{\nu_{vbm}(R^2, w)}, U_{\nu_{vbm}(R^2, w)} \right] \right) \to 0 \text{ as } n \to \infty \text{ for } R^2 > \Tilde{R^2}.
    \end{align*}
\end{theorem}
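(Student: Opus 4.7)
The overall strategy is to invoke Theorem 3.2 and reduce the problem to the algebraic exercise of solving the estimating equation $\xi_{\nu_{vbm}(R^2,w)} - \tau(w) = 0$ for $R^2$, then to verify that the probability statement in the theorem corresponds to the power phase transition captured by Theorems 3.1 and 3.2.

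First, I would identify $\xi_{\nu_{vbm}(R^2, w)}$ explicitly. By definition,
$$\xi_{\nu_{vbm}(R^2, w)} \;=\; \tau(w) - L_{\nu_{vbm}(R^2, w)} \;=\; \sup_{\tilde w \in \nu_{vbm}(R^2, w)} \text{Bias}(\tau(w) \mid \tilde w),$$
and by Equation \eqref{eqn:optim_bias} (which I would appeal to \citet{huang2022variance} to justify as sharp), this supremum equals $\sqrt{1 - \cor(w, Y \mid Z = 0)^2}\sqrt{\tfrac{R^2}{1-R^2}\var(Y \mid Z = 0)\var(w \mid Z = 0)}$. Setting this equal to $\tau(w)$, squaring both sides, and rearranging produces $R^2/(1-R^2) = a^2$ with $a^2$ defined as in the theorem statement. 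Solving for $R^2$ yields $R^2 = a^2/(1 + a^2) = \tilde{R}^2$, which is the unique root of the estimating equation on $[0,1)$.

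Next, I would translate the probability statement into the power language of Theorem 3.1. Under the favorable situation with $\tau(w) > 0$, the event $0 \notin [L_{\nu_{vbm}(R^2, w)}, U_{\nu_{vbm}(R^2, w)}]$ (or its sample analog used to reject the null) is exactly what is evaluated in Equation \eqref{eqn:bias_dist}. Because $\xi_{\nu_{vbm}(R^2, w)}$ is strictly increasing in $R^2$, the sign of $\xi_{\nu_{vbm}(R^2, w)} - \tau(w)$ is negative for $R^2 < \tilde{R}^2$ and positive for $R^2 > \tilde{R}^2$. Since the bias term in the normal approximation is scaled by $\sqrt{n}$, the power therefore tends to $1$ in the first regime and $0$ in the second, which is precisely what the theorem claims.

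The main obstacle will be the bridge between the population-level effect interval $[L_{\nu_{vbm}(R^2, w)}, U_{\nu_{vbm}(R^2, w)}]$ stated in the theorem and the sampling event whose probability is evaluated. Executing this step requires invoking asymptotic normality of $\hat\tau(\hat w)$ under Assumption \ref{assump:regularity_conds}, verifying that the plug-in estimators for $\var(Y \mid Z = 0)$, $\var(w \mid Z = 0)$, and $\cor(w, Y \mid Z = 0)$ are consistent at rates that do not disturb the $\sqrt{n}$ scaling in Equation \eqref{eqn:bias_dist}, and confirming that the interval with endpoints $\hat\tau(\hat w) \pm \widehat{\text{bias bound}}$ inherits a Gaussian limit. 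Once this plumbing is in place, the remainder of the argument is the algebraic inversion carried out above.
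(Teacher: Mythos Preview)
Your proposal is correct and follows essentially the same route as the paper: identify $\xi_{\nu_{vbm}(R^2,w)}$ with the bias bound of Equation~\eqref{eqn:optim_bias} (citing \citet{huang2022variance}), set it equal to $\tau(w)$, and solve algebraically to obtain $\tilde R^2 = a^2/(1+a^2)$, then read off the phase transition from Theorem~\ref{thm:power}/\ref{thm:ds}. The only point the paper foregrounds that you leave implicit is the verification that $\sigma_{\nu_{vbm}(R^2,w)} < \infty$, which is a hypothesis of Theorem~\ref{thm:ds}; the paper dispatches this by citing Theorem~3.2 of \citet{huang2022variance}, and you should make that citation explicit rather than folding it into your ``Gaussian limit'' discussion.
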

Theorem \ref{thm:design_sensitivity_vbm} highlights the drivers of design sensitivity of the variance-based sensitivity model. In particular, the variance of the estimated weights (i.e., $\var(w \mid Z =0)$), the variance of the outcomes (i.e., $\var(Y \mid Z = 0)$), the correlation between the estimated weights and the outcomes (i.e., $\cor(w, Y \mid Z = 0)$), and the effect size will affect the size of the design sensitivity. 

Theorem \ref{thm:design_sensitivity_vbm} suggests that design decisions that reduce the variance in the estimated weights or the variance in the outcomes, or increase the association between the weights and the outcomes, can help increase the design sensitivity, and by extension, improve robustness to unobserved confounding. Examples of such decisions include trimming and augmentation. It is important to note that not all of these design choices guarantee an improvement in design sensitivity because they may affect all three of the components highlighted above. The specific impact of each decision requires numerical assessment, and Section \ref{sec:des_choices} uses design sensitivity calculations to determine when we expect improvements to design sensitivity under each design choice.

\subsection{Design Sensitivity in the Marginal Sensitivity Model} \label{subsec:msm_design} 
In the marginal sensitivity model, unlike the variance-based sensitivity model, design sensitivity does not have a closed form.  We characterize it via an estimating equation. 

\begin{theorem}[Design Sensitivity for the Marginal Sensitivity Model]\label{thm:design_sensitivity_msm}
Define $\Tilde{\Lambda}$ as any solution to the following estimating equation (where $F_{y \mid x}$ represents the population cdf of $y$ given $x$ under the favorable situation): \begin{align*} 
\mathbb{E}&[w Y(0) \mid Z = 0] + \tau \\
&= \underset{\theta \in [0,1]}{\sup} \frac{\Lambda\mathbb{E}\left[w Y(0) G_{\theta}(Y(0)) \mid Z=0 \right]  + \frac{1}{\Lambda}\mathbb{E}\left[ w Y(0) (1-G_{\theta}(Y(0)))\mid Z=0\right]}{ \Lambda\mathbb{E}\left[wG_{\theta}(Y(0)) \mid Z=0 \right] + \frac{1}{\Lambda}\mathbb{E}\left[w (1-G_{\theta}(Y(0))) \mid Z=0 \right]},
\end{align*}
where $G_{\theta}(Y(0)) = \mathbbm{1}\{Y(0) \geq F^{-1}_{Y(0) \mid Z=0}(1-\theta)\}$.

\noindent Then under mild regularity assumptions $\widetilde{\Lambda}$ is the design sensitivity, i.e.
    \begin{align*}
        & \Pr\left(0 \notin \left[L_{\nu_{msm}(\Lambda, w)}, U_{\nu_{msm}(\Lambda, w)} \right]\right) \to 1 \text{ as } n \to \infty \text{ for } \Lambda < \Tilde{\Lambda} \\
       \text{and} \quad & \Pr\left(0 \notin \left[L_{\nu_{msm}(\Lambda, w)}, U_{\nu_{msm}(\Lambda, w)} \right]\right) \to 0 \text{ as } n \to \infty \text{ for } \Lambda > \Tilde{\Lambda}.
    \end{align*}
\end{theorem}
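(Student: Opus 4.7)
The plan is to apply Theorem \ref{thm:ds}, which characterizes $\widetilde{\Lambda}$ as the solution of $\xi_{\nu_{msm}(\Lambda, w)} - \tau(w) = 0$. Since $\xi_{\nu_{msm}(\Lambda, w)} = \tau(w) - \tau_{\nu_{msm}(\Lambda, w)}$ and in the favorable situation $\tau(w) = \tau$, this reduces to finding $\widetilde{\Lambda}$ such that $\tau_{\nu_{msm}(\widetilde{\Lambda}, w)} = 0$. Writing $\tau(\tilde w)$ in the Hajek ratio form matching the sample estimator in \eqref{eq:att_est}, the infimum defining $\tau_{\nu_{msm}(\Lambda, w)}$ becomes
\[
\mathbb{E}[Y \mid Z=1] \;-\; \sup_{\tilde w:\,\Lambda^{-1}w \,\leq\, \tilde w \,\leq\, \Lambda w}\, \frac{\mathbb{E}[\tilde w\, Y(0) \mid Z=0]}{\mathbb{E}[\tilde w \mid Z=0]},
\]
so the task reduces to a fractional linear program on the box $[\Lambda^{-1} w,\,\Lambda w]$.

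The heart of the argument is a bang-bang characterization of this fractional LP, obtained by a Charnes--Cooper (equivalently, Dinkelbach) reduction. For each $\lambda \in \mathbb{R}$ the linearized program $\sup_{\tilde w \in \nu_{msm}(\Lambda, w)} \mathbb{E}[\tilde w (Y(0) - \lambda) \mid Z=0]$ is a pointwise box LP whose optimum is the two-valued function $\tilde w^*_\lambda(X,U) = \Lambda w(X)\,\mathbbm{1}\{Y(0) \geq \lambda\} + \Lambda^{-1} w(X)\,\mathbbm{1}\{Y(0) < \lambda\}$; the supremum of the fractional objective coincides with the unique $\lambda^\star$ at which this linearized maximum vanishes, and is attained at $\tilde w^*_{\lambda^\star}$. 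Thus every optimizer is indexed by a single scalar threshold on $Y(0)$, which, under continuity of $F_{Y(0) \mid Z=0}$, can equivalently be parameterized by a quantile level $\theta \in [0,1]$ via $\lambda = F^{-1}_{Y(0) \mid Z=0}(1-\theta)$, so that $\mathbbm{1}\{Y(0) \geq \lambda\} = G_\theta(Y(0))$.

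Substituting $\tilde w^*_\lambda$ into the fractional objective produces exactly the ratio appearing on the right-hand side of the theorem's estimating equation, and taking the supremum over $\theta \in [0,1]$ sweeps over all admissible thresholds (values outside the support of $Y(0) \mid Z=0$ correspond to $\theta \in \{0,1\}$). Using the favorable-situation identity $\mathbb{E}[Y \mid Z=1] = \mathbb{E}[Y(0) \mid Z=1] + \tau = \mathbb{E}[w\,Y(0) \mid Z=0] + \tau$ on the left-hand side and setting $\tau_{\nu_{msm}(\widetilde{\Lambda}, w)} = 0$ yields precisely the stated estimating equation; Theorem \ref{thm:ds} then delivers the design-sensitivity conclusion. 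The main obstacle is the bang-bang step: establishing existence and uniqueness of $\lambda^\star$, where the ``mild regularity'' hypothesis is needed (continuity of $F_{Y(0) \mid Z=0}$ together with $w > 0$), and verifying that $\tilde w^*_\lambda$ is admissible as a measurable function of $(X,U)$ via $Y(0)$; the remaining algebra is routine.
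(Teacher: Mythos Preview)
Your population-level argument is sound and lands on the right estimating equation, but it takes a different route from the paper and, in doing so, defers a nontrivial step. The paper does \emph{not} invoke Theorem~\ref{thm:ds}; instead it works directly with the sample-level sensitivity bound $\min_{\tilde w \in \nu_{msm}(\Lambda,w)} \hat\tau(\tilde w)$, applies Zhao et al.'s sample-level bang-bang result (their Proposition~2) to write this as a maximum over a finite set of thresholds, and then proves a Glivenko--Cantelli lemma (via an explicit bracketing-number argument) showing that this sample maximum converges uniformly to the population ratio on the right-hand side of the estimating equation. Your Dinkelbach/Charnes--Cooper step replaces the first of these with a direct population-level bang-bang, which is clean and self-contained; it is your appeal to Theorem~\ref{thm:ds} in place of the second step where the real difference lies.

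The issue is that Theorems~\ref{thm:power}--\ref{thm:ds} are phrased for a sensitivity test of the form $\hat\tau(\hat w) - \xi_{\nu(\Gamma,w)}$ with $\xi$ a fixed population constant, whereas the MSM procedure actually rejects based on the sample linear program $\hat L_{\nu_{msm}(\Lambda,w)}$. For the phase transition in the theorem's conclusion to hold for the \emph{actual} MSM analysis you need $\hat L_{\nu_{msm}(\Lambda,w)} \to L_{\nu_{msm}(\Lambda,w)}$ in probability, and that is exactly the content of the paper's Glivenko--Cantelli lemma. Your shortcut through Theorem~\ref{thm:ds} folds this into the ``mild regularity'' caveat, which is acceptable as a sketch but hides the one genuinely technical step; making that invocation rigorous for the MSM would force you to re-prove something close to the paper's uniform-convergence lemma. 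What your route buys is a transparent derivation of the threshold form of the optimizer without citing an external result; what the paper's route buys is an explicit link between the procedure researchers actually run and its population limit.
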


The design sensitivity under the set of marginal sensitivity models is a function of the joint cumulative density function of the weights and the control outcomes, and depends on an optimal cutoff $\theta$. As shown by \citet{zhao2019sensitivity}, the worst-case setting for the marginal sensitivity model requires observations with the smallest $Y(0)$ values to have the smallest weights possible under the model (scaling observed weights by $\Lambda^{-1}$), while setting the observations with largest $Y(0)$ to have the largest weights possible (scaling observed weights by $\Lambda$). 
$\theta$ represents the population version of the cutoff in the ordering at which small weights are replaced by large weights.

The optimal value of $\theta$ (i.e. the value for which the expectation is maximized) depends on $\Lambda$, $w$ and the alternative distribution for the outcomes (conditional on covariates) specified as part of defining the favorable situation. Results similar to Theorem \ref{thm:design_sensitivity_msm} hold for trimmed and augmented weighting estimators (see Appendix \ref{sec:proof_msm_corollary}). 

Because Theorem \ref{thm:design_sensitivity_msm} does not give a closed expression for $\widetilde{\Lambda}$, we examine the drivers of design sensitivity for the marginal sensitivity model using numerical simulations. 
\begin{example}[Drivers of Design Sensitivity]\label{ex:sim} 
Define the treatment assignment process and the outcome model as follows: 
$$P(Z = 1 \mid X) \propto \frac{\exp(\beta_\pi X)}{1+\exp(\beta_\pi X)}, \ \ \ \ \ \ Y = \beta_y X + \tau Z + u,$$
where $X \iid N(\mu_x, \sigma^2_x)$ and $u \iid N(0, \sigma^2_y)$. We sample from this process under different choices for parameters $\{\beta_\pi, \beta_y, \tau, \sigma_y\}$ and estimate  design sensitivity under both the variance-based sensitivity model and the marginal sensitivity model (see Appendix \ref{app:sim} for full details). In contrast to the variance-based sensitivity model, design sensitivity under the marginal sensitivity model is primarily driven by the effect size and the variance in the outcomes. 
\begin{figure}[!ht]
\centering 
\includegraphics[width=0.8\textwidth]{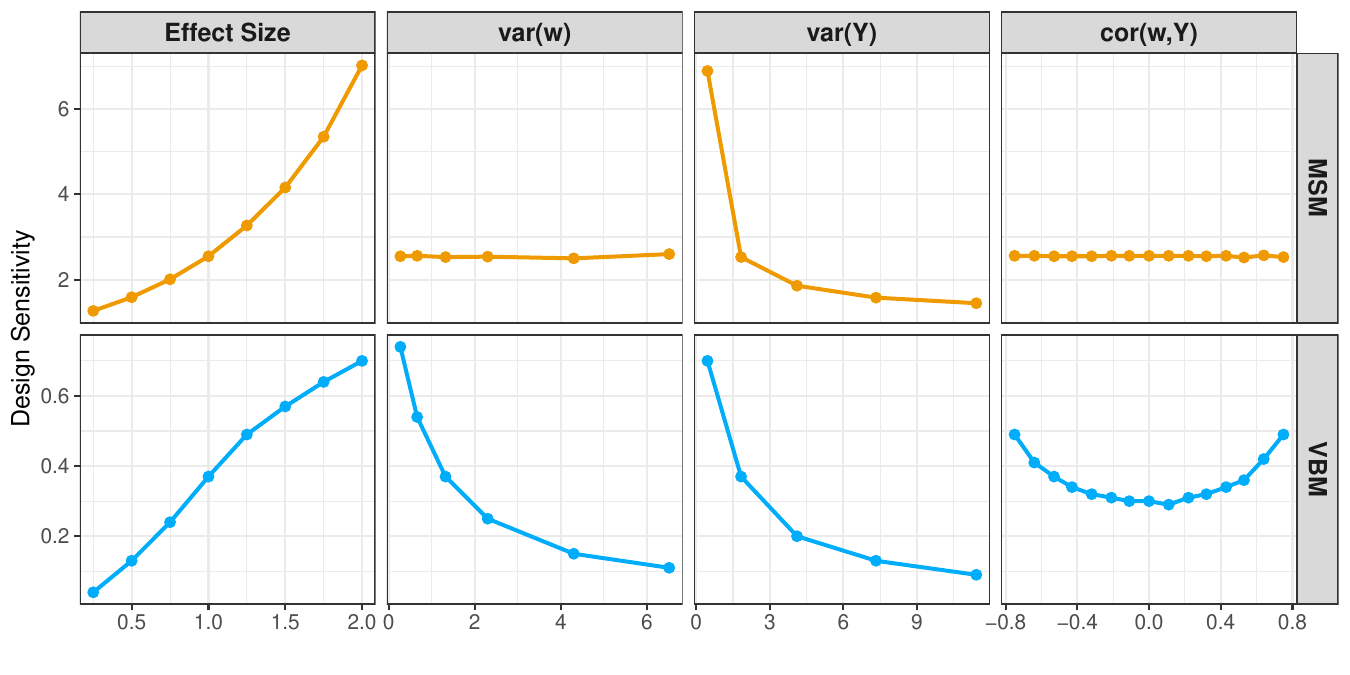}
\caption{Magnitude of the design sensitivities for both MSM and VBM varying different aspects of the data generating process.} 
\label{fig:ex_1}
\end{figure} 
\end{example}

In the following section, we show how two specific design choices --- augmentation of weighting designs with outcome models and trimming of estimated weights --- influence design sensitivity, providing new perspective on the advantages they bring.

\section{Design Choices that Impact Design Sensitivity} \label{sec:des_choices}

\subsection{Augmentation using Outcome Models}
Consider an augmented weighted estimator of the following form: 
$$\hat \tau^{aug}(\hat w)= \frac{1}{\sum_{i=1}^n Z_i} \sum_{i=1}^n Z_i Y_i - \left(\frac{\sum_{i=1}^n \hat w_i (1-Z_i) (Y_i - \hat g(X_i))}{\sum_{i=1}^n \hat w_i (1-Z_i)} + \frac{1}{\sum_{i=1}^n Z_i} \sum_{i=1}^n Z_i \hat g(X_i)\right),$$
where $\hat g(X_i)$ represents an estimated outcome model. The augmented weighted estimator is doubly robust: as long as either the outcome model (i.e., $\hat g$), or the treatment assignment model (i.e., $\hat w_i$) is correctly specified, the augmented weighted estimator will consistently estimate the ATT \citep{tan2007comment, bang2005doubly, kang2007demystifying}. 
However, doubly robust estimation does not eliminate concerns about omitted variable bias. More specifically, if there is an omitted confounder that is relevant to both the treatment and outcome, then neither the outcome model nor the treatment assignment model will be correctly specified. Compounding these concerns, when one (or both) of the models is misspecified, the finite-sample performance of augmented weighted estimators may be inferior to standard weighted (or regression) estimators \citep{kang2007demystifying}. 

Our key result helps resolve these questions by demonstrating that augmentation can improve robustness to unobserved confounding in large samples even when the outcome model is misspecified, suggesting a clear advantage to augmented estimation distinct from double robustness.

\begin{theorem}[Impact of Augmentation on Design Sensitivity] \label{thm:aug} \mbox{}\\
Let $e := Y -  g(X)$ be the population residual from an arbitrary, fixed outcome model $g$ used to augment a weighted estimate. Then, for the variance-based sensitivity model, the design sensitivity from an augmented weighted estimator will be greater than the design sensitivity for a standard weighted estimator if the following holds: 
\begin{equation} 
\var(e \mid Z = 0) \leq \frac{1-\cor(w, Y \mid Z = 0 )^2}{1-\cor(w, e \mid Z = 0)^2} \cdot \var(Y \mid Z = 0).
\label{eqn:aug} 
\end{equation} 
\end{theorem}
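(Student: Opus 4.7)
The plan is to show that the augmented estimator is algebraically equivalent to the standard weighted estimator applied to the residuals $e_i := Y_i - g(X_i)$, and then to invoke Theorem \ref{thm:design_sensitivity_vbm} with $Y$ replaced by $e$. First I would rearrange
\[
\hat\tau^{aug}(\hat w) = \frac{1}{\sum_i Z_i}\sum_i Z_i\bigl(Y_i - \hat g(X_i)\bigr) - \frac{\sum_i \hat w_i(1-Z_i)\bigl(Y_i - \hat g(X_i)\bigr)}{\sum_i \hat w_i (1-Z_i)},
\]
which is exactly the weighted estimator of Equation \eqref{eq:att_est} with outcome $e_i$ in place of $Y_i$. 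Under the regularity conditions that give asymptotic normality of the standard weighted estimator, the same argument delivers asymptotic normality of $\hat\tau^{aug}(\hat w)$, and the population target in the favorable situation is $\tau^{aug}(w) = \E[e \mid Z=1] - \E[we \mid Z=0]$.

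Next I would verify that this target coincides with $\tau$ under the favorable situation. Since $w = \Pr(Z=1 \mid X)/\Pr(Z=0 \mid X)$ correctly balances any function of $X$, we have $\E[wg(X) \mid Z=0] = \E[g(X) \mid Z=1]$, so
\[
\tau^{aug}(w) = \E[Y \mid Z=1] - \E[g(X) \mid Z=1] - \E[wY \mid Z=0] + \E[wg(X)\mid Z=0] = \tau(w) = \tau.
\]
Thus the numerator $\tau(w)^2$ appearing in the expression for $a^2$ in Theorem \ref{thm:design_sensitivity_vbm} is unchanged by augmentation; only the correlation and variance terms that depend on the outcome shift, with $Y$ replaced by $e$.

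Applying Theorem \ref{thm:design_sensitivity_vbm} directly to the augmented estimator then yields design sensitivity $\tilde R^2_{aug} = a_{aug}^2/(1+a_{aug}^2)$ with
\[
a_{aug}^2 = \frac{1}{1-\cor(w,e \mid Z=0)^2} \cdot \frac{\tau(w)^2}{\var(w \mid Z=0)\var(e \mid Z=0)},
\]
while the standard design sensitivity is $\tilde R^2_{std} = a_{std}^2/(1+a_{std}^2)$ with the analogous $a_{std}^2$ using $Y$ instead of $e$. Because $t \mapsto t/(1+t)$ is strictly increasing on $[0,\infty)$, we have $\tilde R^2_{aug} > \tilde R^2_{std}$ if and only if $a_{aug}^2 > a_{std}^2$; cancelling the common factor $\tau(w)^2/\var(w \mid Z=0)$ and rearranging produces exactly Equation \eqref{eqn:aug}.

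The main obstacle I anticipate is not the algebra but the justification that Theorem \ref{thm:design_sensitivity_vbm} and the bias bound in Equation \eqref{eqn:optim_bias} transfer cleanly from $Y$ to $e$ for the augmented estimator. Concretely, one must check that the VBM neighborhood $\nu_{vbm}(R^2,w)$, which is defined in terms of $w$ alone, induces the same worst-case bias representation when the estimand is written in terms of residuals; this amounts to confirming that the Cauchy--Schwarz argument underlying Equation \eqref{eqn:optim_bias} uses only the outcome variable generically, so its application to $e$ is legitimate. Once this is established, the monotonicity argument above completes the proof.
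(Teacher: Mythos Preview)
Your proposal is correct and follows essentially the same route as the paper: derive the augmented design sensitivity by replacing $Y$ with $e$ in the VBM bias bound, argue that $\tau^{aug}(w)=\tau(w)$ in the favorable situation, and then compare the two $\tilde R^2$ formulas. The only cosmetic differences are that the paper invokes double robustness (rather than your direct balancing argument for $\E[wg(X)\mid Z=0]=\E[g(X)\mid Z=1]$) and computes the ratio $\tilde R^2_{aug}/\tilde R^2$ explicitly instead of appealing to monotonicity of $t\mapsto t/(1+t)$; the bias-bound transfer you flag as the main obstacle is handled in the paper by citation to Theorem 5.1 of \citet{huang2022sensitivity}.
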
 
While Theorem \ref{thm:aug} assumes a fixed outcome model $g$, we may relax the assumption of a fixed outcome model by extending the $Z$-estimation framework, introduced in 
\citet{zhao2019sensitivity}.

If the correlation between the estimated weights and the outcomes and the correlation between the estimated weights and the residuals are roughly similar (i.e., $\cor(w, Y \mid Z = 0) \approx \cor(w, e \mid Z = 0)$), then Equation \eqref{eqn:aug} simplifies to a simple comparison between the variance across the residuals and the variance of the outcomes (i.e., $\var(e \mid Z = 0) \leq \var(Y \mid Z = 0)$). 

Theorem \ref{thm:aug} highlights that the degree of improved robustness from augmentation depends directly on how much variation the estimated outcome model is able to explain in the outcomes across the control group. In other words, if $\var(e \mid Z = 0)$ is relatively small, then we expect a larger improvement in design sensitivity from augmentation. Importantly, the gains to design sensitivity from augmentation are not dependent on any additional specification assumptions. Even if the outcome model is misspecified, if it successfully explains variance in the control outcomes (while maintaining similar outcome-weight correlations), then augmentation will improve the robustness of estimated effects.

Similar results hold for the marginal sensitivity model, although we cannot obtain closed-form criteria (see Appendix \ref{sec:app_proofs} for more discussion). This is consistent with the results from Example \ref{ex:sim}, in which the design sensitivity of the marginal sensitivity model varied systematically with the variation in the outcomes. Highly variable outcomes lead to more extreme outcome values and worse worst-case bounds as in the matching context described by \citet{rosenbaum2005heterogeneity}, so stabilizing outcomes improves robustness by limiting the extremity of worst-case settings.

\paragraph{Remark.} For the marginal sensitivity model, \citet{dorn2021sharp} showed that although the usual sensitivity analysis (described in Section \ref{subsec:ex_msm}) does not generally produce sharp bounds, augmentation using a correctly-specified outcome model can render the sensitivity bounds sharp in some settings. This provides another possible explanation for the improvements in design sensitivity arising from augmentation: at least in part, augmentation may be eliminating looseness in the bounds on the effect interval.  This suggests that augmented versions of  provably sharp methods for sensitivity analysis, such as \citet{dorn2021sharp}'s approach based on quantile regression, may not deliver the improvements in design sensitivity that we see here for our two approaches.

\subsection{Trimming}
\label{sec:des_trimming}
Another design decision that commonly arises in practice is trimming, or exclusion of units with extreme weights. Trimming implicitly redefines the estimand of interest to exclude units with extreme propensity scores, which can be helpful in cases in which researchers are worried about potential overlap or positivity violations. Under trimming, we consider a modified estimand: 
\begin{equation} 
\tau_{trim} := \E(Y(1) - Y(0) \mid Z = 1, X \in \mathcal{A}),
\label{eqn:trim_estimand} 
\end{equation} 
where $\mathcal{A}:= \{x \in X \mid a \leq P(Z = 1 \mid x) \leq 1-a\}$—i.e., the set of covariate values for which the probability of treatment, conditional on the observed covariates, is strictly bounded away from 0 and 1 \citep{crump2009dealing}. Equation \eqref{eqn:trim_estimand} defines the estimand as a function of conditional probabilities in the \textit{observable} covariates $X$. When researchers wish to consider trimming with respect to an oracle set $\mathcal{A}^*$, which also conditions on the omitted variable, the underlying procedure for estimating the bounds for both sensitivity models must be changed to account for trimming with respect to the ideal weights $w^*$ instead of $w$. We defer the details of such a procedure for future work. 

In practice, in the context of ATT estimation, we focus on trimming large weights by choosing a cutoff $m$ for the weights $w$. The following theorem shows that for any degree of trimming, the relative improvement to design sensitivity for the variance-based sensitivity model depends on how successfully trimming reduces the variance of the weights compared to the reduction in the variance of $Y$ and the change in the correlation between weights and outcomes.  Design sensitivity for the marginal sensitivity model under trimming is described in Appendix \ref{sec:proof_msm_corollary}. 

\begin{theorem}[Impact of Trimming Weights on Design Sensitivity] \label{thm:trim} 
Let $m$ be a cutoff above which weights are trimmed. Assume that the trimmed weights have mean 1 and that the treatment effect is constant. Then, for the variance-based sensitivity model, the design sensitivity from a trimmed estimator will be greater than the standard weighted estimator if the following holds:
\begin{equation} 
\underbrace{\frac{\var(w \mid w < m, Z = 0)}{\var(w\mid Z =0 )}}_{(a) \text{Variance reduction in } w} \leq \underbrace{\frac{1-\cor(w, Y \mid Z = 0)^2}{1-\cor(w, Y \mid w < m, Z = 0)^2}}_{(b) \text{ Change in relationship between } w \text{ and } Y} \cdot \underbrace{\frac{\var(Y \mid Z = 0)}{\var(Y \mid w < m, Z = 0)}}_{(c) \text{ Variance reduction in } Y}.
\label{eqn:trimming_criteria} 
\end{equation} 
\end{theorem}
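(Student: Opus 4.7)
My plan is to reduce the comparison of design sensitivities to a comparison of the $a^2$ quantities appearing in Theorem \ref{thm:design_sensitivity_vbm}, since $\tilde R^2 = a^2/(1+a^2)$ is strictly monotone in $a^2$. Concretely, the trimmed estimator is a standard weighted estimator applied to the subpopulation $\{X : w(X) < m\}$ of controls (matched with the corresponding treated subpopulation via the assumption that $w$ still has mean one on the trimmed set), and under a constant treatment effect, the trimmed estimand $\tau_{trim}$ in \eqref{eqn:trim_estimand} equals the original $\tau$. So I would first invoke Theorem \ref{thm:design_sensitivity_vbm} on this trimmed subpopulation, replacing $\var(w \mid Z=0)$, $\var(Y \mid Z=0)$, and $\cor(w, Y \mid Z=0)$ with their counterparts conditioned on $\{w < m, Z = 0\}$, while the numerator $\tau(w)^2$ is preserved.

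Second, I would write down the ratio
\[
\frac{a^2_{\mathrm{trim}}}{a^2_{\mathrm{full}}} = \frac{\bigl(1-\cor(w, Y \mid Z = 0)^2\bigr) \var(w \mid Z = 0)\, \var(Y \mid Z = 0)}{\bigl(1-\cor(w, Y \mid w < m, Z = 0)^2\bigr) \var(w \mid w < m, Z = 0)\, \var(Y \mid w < m, Z = 0)},
\]
and observe that $a^2_{\mathrm{trim}} \geq a^2_{\mathrm{full}}$, and hence $\widetilde R^2_{\mathrm{trim}} \geq \widetilde R^2_{\mathrm{full}}$, if and only if the numerator of this ratio dominates the denominator. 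Rearranging yields exactly the inequality \eqref{eqn:trimming_criteria}, with the three interpretable ratios $(a)$, $(b)$, $(c)$ corresponding to how trimming changes the weight variance, the weight--outcome correlation, and the outcome variance respectively.

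The main obstacle is justifying the substitution in the first step: I need to verify that applying Theorem \ref{thm:design_sensitivity_vbm} to the trimmed estimator genuinely produces the conditional population moments on $\{w < m, Z = 0\}$, and that $\tau(w)$ is preserved. The mean-one assumption on the trimmed weights ensures the weighted estimator is still a consistent Hajek-type estimator of the trimmed ATT, so the limiting object $\tau(w)$ on the trimmed subpopulation is well-defined and inherits the same bias-bound derivation from \eqref{eqn:optim_bias} applied on the subpopulation. Combined with the constant treatment effect assumption, which equates $\tau_{\mathrm{trim}}$ and $\tau$, this lets me treat the numerator as invariant and attribute all of the design-sensitivity difference to the three ratios in \eqref{eqn:trimming_criteria}. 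Once that reduction is in place, the remaining algebra is routine.
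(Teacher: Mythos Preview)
Your proposal is correct and follows essentially the same route as the paper: both derive the design sensitivity for the trimmed estimator by re-applying the variance-based bias bound \eqref{eqn:optim_bias} on the subpopulation $\{w<m,\,Z=0\}$ (using the mean-one assumption to justify this), invoke the constant treatment effect to equate $\tau_{\mathrm{trim}}$ with $\tau(w)$, and then compare the two design sensitivities algebraically. Your shortcut of comparing $a^2_{\mathrm{trim}}$ and $a^2_{\mathrm{full}}$ via the strict monotonicity of $a^2\mapsto a^2/(1+a^2)$ is slightly cleaner than the paper's direct manipulation of the ratio $\tilde R^2_{\mathrm{trim}}/\tilde R^2$, but the substance is identical.
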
 
Unlike augmentation, in which design sensitivity is improved so long as researchers are able to estimate an outcome model that explains some variation in the outcome, trimming provides weaker guarantees on improvements to design sensitivity. More specifically, Equation \eqref{eqn:trimming_criteria} provides a bound on the necessary variance reduction in the weights to improve design sensitivity. By construction, the variance of the trimmed weights (i.e., $\var(w \mid w < m, Z = 0)$) will be no greater than the variance of the untrimmed weights (i.e., $\var(w \mid Z = 0)$). If the right-hand side of Equation \eqref{eqn:trimming_criteria} were greater than or equal to 1, then this bound would be trivially met. 

The magnitude of the bound depends on the correlation between the weights and the outcome. For intuition, consider the scenario in which the weights and the outcomes are highly correlated. Removing extreme weights will also remove extreme outcomes, such that the post-trimming outcome variance will be smaller than the initial variance. As a result, we expect $\var(Y \mid w < m, Z = 0)$ to be less than $\var(Y \mid Z = 0)$, thereby increasing term (c) in Equation \eqref{eqn:trimming_criteria}. In cases when the weights are completely unrelated to the outcome,  we expect $\var(Y \mid Z = 0) \approx \var(Y \mid w < m, Z = 0)$. In that case, Equation \eqref{eqn:trimming_criteria}-(c) will be approximately equal to 1. 

An added complexity is that the bound is also dependent on potential changes in the linear relationship between $w$ and $Y$ (i.e., Equation \eqref{eqn:trimming_criteria}-(b)). As a result, if extreme values in the weights correspond to large values of the outcome $Y$, then by trimming, we may reduce the correlation between $w$ and $Y$. In practice, $\cor(w, Y \mid Z = 0)$ tends to be relatively low; as a result, we expect changes in the relationship between $w$ and $Y$ from trimming to be relatively small. 

Theorem \ref{thm:trim} assumed a constant treatment effect to simplify the criteria in Equation \ref{eqn:trimming_criteria}. Notably, this assumption is not necessary for the existence of design sensitivity for the trimmed estimator (see Appendix \ref{sec:app_proofs} for more discussion). However, the existence of treatment effect heterogeneity introduces additional complexity for evaluating the impact of trimming on design sensitivity. If weights and treatment effects are positively correlated, trimming large weights will tend to exclude the subjects with the largest treatment effects. As such, trimming would reduce the treatment effect size and therefore also reduce design sensitivity. Conversely, trimming can increase the effect size and improve design sensitivity when the weights and treatment effect are negatively correlated. For more discussion of the connection between treatment effect heterogeneity and design sensitivity, see \citet{rosenbaum2007confidence}.

 In practice, researchers may utilize Theorem \ref{thm:design_sensitivity_vbm} and Theorem \ref{thm:design_sensitivity_msm} to estimate the design sensitivity under different trimming criteria. While Theorem \ref{thm:trim} is formulated with respect to a trimmed estimator that omits units with extreme weights, the results easily extend in cases when researchers use a smooth trimmed estimator instead (see Appendix \ref{sec:app_proofs} for details). By computing design sensitivity, researchers can directly assess whether or not trimming can help improve robustness to omitted variable bias, and whether or not these potential gains to robustness are worth the trade-off of using a different estimand of interest. For example, if researchers estimate the design sensitivity under trimming and find that for a small effect size, trimming a small portion of extreme weights results in a large improvement in design sensitivity, it may be helpful to perform trimming. In contrast, if researchers find that only at a large effect size or only by trimming large number of weights would trimming help with design sensitivity, it would not be worth performing trimming and altering the estimand of interest. 

\section{Empirical Application: FARC Peace Agreement}
\label{sec:empirical_app}
\subsection{Background and Context}
To illustrate design sensitivity, we re-analyze a study from \cite{hazlett2023unconfounded}. After decades of fighting, the Colombian government under President Juan Santos reached a historic peace deal with the Revolutionary Armed Forces of Colombia (FARC). However, in a 2016 referendum, the public narrowly voted to reject the peace deal. The FARC peace deal remains an important case study in understanding drivers of support for peace. 

Following \cite{hazlett2023unconfounded}, we examine two prevalent hypotheses for drivers of support for peace: (1) exposure to violence, and (2) presidential support. We define the outcome of interest as the proportion of individuals at the municipality level who voted in favor of the peace deal. For exposure to violence, treatment is defined by whether any recorded deaths attributed to FARC occurred in a municipality. For presidential support, treatment is defined as whether or not President Santos won the popular vote in the municipality in the second round of presidential elections (which would have implied that he won that particular region). We estimate inverse propensity score weights by fitting a logistic regression using the available pre-treatment covariate data. This includes variables such as past incidents of FARC-related deaths, GDP per capita for a specific municipality, and the number of people who live in each municipality. 

We vary the possible treatment effects and estimate the resulting design sensitivities. Figure \ref{fig:farc} displays the results. Because we are examining the percentage of individuals who vote in favor of the peace deal in a municipality as the outcome, the range of possible treatment effects is restricted by the fact that the average treatment outcome cannot be outside the range 0 - 100\%. In practice, researchers can estimate design sensitivity by calibrating to a chosen outcome distribution. For illustrative purposes, we calibrate the estimated design sensitivities using the true outcome distribution. However, the results can be estimated for any arbitrary distribution for the outcome. See Appendix \ref{app:planning_sample} for recommendations for using a planning sample to calibrate outcome distributions.

\subsection{Illustration on the Variance-Based Sensitivity Models}
To examine the potential impact of augmentation, we vary the amount of variation that can be explained in the control outcomes by a hypothetical outcome model and estimate the updated design sensitivity. Consistent with Theorem \ref{thm:aug}, we see that as the amount of variation explained in the outcomes increases, the amount of improvement in design sensitivity also increases. Notably, even in cases when the outcome model can explain 50\% of the variation in the control outcomes, there is a relatively limited impact on the design sensitivity for the variance-based model. 

To assess the impact of trimming, we estimate the design sensitivity of the weighted estimator, trimming at thresholds of 0.9 and 0.8 (trimming observations with estimated propensity scores greater than these values). This improves the design sensitivity uniformly across all effect sizes. Trimming a small number of extreme weights results in large improvements in the design sensitivity, even for a relatively low effect size. For example, for the hypothesis of exposure to violence, trimming weights that correspond to propensity scores greater than 0.9 would result in excluding 3 observations, out of 1,123 total observations (i.e., 0.26\% of control units). For an effect size of 10, this would correspond to an increase in the design sensitivity from $\tilde R^2 = 0.01$ to $\tilde R^2 = 0.11$. This implies that assuming an effect size of 10, without trimming, we could only identify a true effect if the imbalance from an omitted confounder explained less than 1\% of the variation in the ideal weights. However, after trimming, we would be able to identify a true effect even accounting for a possible confounder that is up to 10 times more imbalanced. 

Importantly, neither design choice (i.e., augmenting and trimming) appear to hurt design sensitivity for the variance-based sensitivity models in this setting. However, from estimating the design sensitivities, it is clear that there are substantial gains to robustness from trimming a few observations from the study. In contrast, while fitting a predictive outcome model can help improve design sensitivity, these improvements are more marginal. 

\begin{figure}[!t]
\centering 
\includegraphics[width=\textwidth]{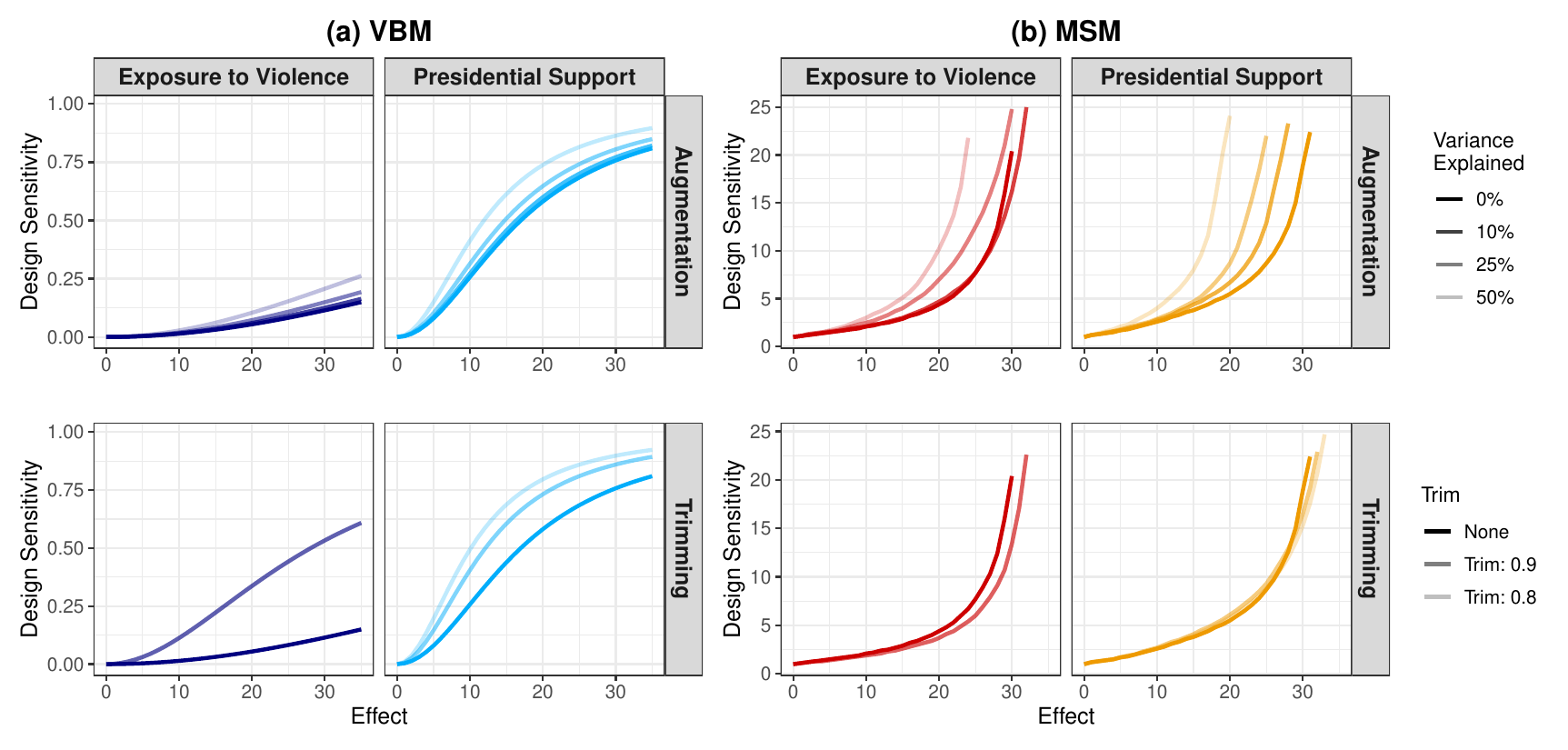}
\caption{Design sensitivities under augmentation and trimming for (a) the variance-based model and (b) the marginal sensitivity model.} 
\label{fig:farc} 
\end{figure} 

\subsection{Illustration on the Marginal Sensitivity Model}
We now illustrate design sensitivity on the marginal sensitivity model. We see a large improvement in the design sensitivity for the marginal sensitivity model from augmentation across all effect sizes. This is likely because the marginal sensitivity model is more susceptible to extreme values in the outcome. By augmenting the weighted estimator, we are able to reduce extreme values in the outcome. Even an outcome model that can explain 10\% of the variation in the control outcomes can result in a substantial improvement in the design sensitivity of the marginal sensitivity model. (See Figure \ref{fig:farc}-(b) for illustration.)

In contrast to the variance-based sensitivity models, we see that for for the marginal sensitivity models, at small effect sizes, trimming does not affect the design sensitivity. However, for large effect sizes (i.e., $\tau > 25$), trimming actually results in a slight reduction in the design sensitivity. As such, in settings where researchers are concerned about robustness to a worst-case error, design sensitivity would suggest that researchers should not perform trimming, and fitting a predictive outcome model would be most helpful at improving robustness.

\subsection{Using a Planning Sample to Improve Power}
\label{sec:power_farc}

In practice, we cannot use the full data to estimate the design sensitivities, as this would violate the design principle. 
\citet{heller2009split} propose randomly splitting the data in an observational study into a planning sample and an analysis sample to inform design decisions. However, using a planning sample comes at a cost; in particular, if we hold out part of the data to use as a planning sample, we cannot use these observations in our analysis (i.e., we restrict ourselves to a smaller sample size). We consider using a similar strategy for weighted observational studies to select between traditional inverse propensity score (IPW), trimmed, and augmented weighting estimators, and illustrate that even with sample splitting, we can improve power for the FARC example. 

To estimate power, we randomly split the data into planning and analysis samples and use the planning sample to estimate design sensitivities for each method and both sensitivity models. (See Appendix \ref{app:planning_sample} for details.) We then conduct sensitivity analyses for the variance-based and marginal sensitivity models under particular values of $R^2$ and $\Lambda$, respectively, using the selected estimation strategies and record whether or not the null hypothesis of no treatment effect is rejected at a 5\% significance level. We repeat this process for 1,000 random splits of the data, estimating power as the proportion of random splits for which we reject.

We estimate the power for each combination of estimator and sensitivity model when using 10\% of the FARC data with the presidential support treatment for the planning sample and the remaining 90\% for the analysis sample. The results are available in Table \ref{tab:est_power10}. For the marginal sensitivity model with $\Lambda = 4$ and the variance-based sensitivity model with $R^2 = 0.25$, the sensitivity analysis for each estimator rejects the null hypothesis of no effect. Additionally, implementing the method selected using the planning sample achieves near or equal to 100\% power in both scenarios. Conversely, the power is near zero for each estimator with $\Lambda = 6$, rendering the choice of estimator moot. The repeated sample splits with $\Lambda = 5$ for the marginal sensitivity model and $R^2 = 0.35$ and $0.67$ for the variance-based model highlight the potential gains in power from using a planning sample to make design decisions. For the former sensitivity model, the augmented weighting estimator greatly outperforms the two alternative estimators. Implementing the method selected by the sample splitting approach yields 68\% power, far higher than would be achieved by using the IPW or trimmed weighting estimator. For $R^2 = 0.35$ and $R^2 = 0.67$, using a planning sample leads us to select the trimmed estimator for each sample split, maximizing power. We repeat the same exercise using 20\% of the data for the planning sample. 

\begin{table}[!ht]
\centering
\textbf{Estimated power for analysis sample using FARC data} \\ 
\vspace{2mm} 
\resizebox{\textwidth}{!}{%
\begin{tabular}{lccccccccccc}
\toprule
& \multicolumn{3}{c}{Reject?} & \multicolumn{4}{c}{10\% of Data for Planning} & \multicolumn{4}{c}{20\% of Data for Planning} \\ \cmidrule(lr){2-4} \cmidrule(lr){5-8} \cmidrule(lr){9-12}
& \multicolumn{3}{c}{Full Sample} & \multicolumn{3}{c}{Analysis Sample} & Chosen &\multicolumn{3}{c}{Analysis Sample} & Chosen\\ 
 &  IPW & Trim  & Aug. &  IPW & Trim  & Aug. & Method &  IPW & Trim  & Aug. &  Method\\ \midrule
\multicolumn{7}{l}{Marginal Sensitivity Model} \\ \midrule
$\Lambda = 4$ & 1 & 1 & 1 & 0.97 & 1.00 & 1.00 & 0.99 & 0.84 & 1.00 & 1.00 & 0.93 \\
$\Lambda = 5$ & 0 & 0 & 1 & 0.10 & 0.28 & 0.86 & 0.68 & 0.18 & 0.40 & 0.73 & 0.51 \\
$\Lambda = 6$ & 0 & 0 & 0 & 0.01 & 0.00 & 0.02 & 0.02 & 0.03 & 0.02 & 0.06 & 0.04 \\
\midrule
\multicolumn{7}{l}{Variance-based Sensitivity Model} \\\midrule
$R^2 = .25$ & 1 & 1 & 1 & 0.81 & 1.00 & 0.85 & 1.00 & 0.67 & 1.00 & 0.73 & 1.00\\
$R^2 = .35$ & 0 & 1 & 0 & 0.18 & 1.00 & 0.27 & 1.00 & 0.22 & 1.00 & 0.28 & 1.00 \\
$R^2 = .67$ & 0 & 0 & 0 & 0.00 & 0.79 & 0.00 & 0.79 & 0.00 & 0.96 & 0.00 & 0.96  \\
\bottomrule
\end{tabular}
}
\caption{For values under full sample, 1 represents rejection of the null hypothesis of no effect at a 95\% significance level for the corresponding estimator and sensitivity parameter value using the full FARC data, while 0 represents failure to reject. Under analysis sample, we display the proportion of rejections across repeated splits of the data as estimated power. Chosen method is the estimated power using the method selected using the planning sample.}
\label{tab:est_power10} 
\end{table}

\subsection{Choice of Sensitivity Model and Interpretation} 
While design sensitivity is useful for deciding which weighting approaches we expect to be most robust given a particular mode of sensitivity analysis, we do not view it as particularly helpful for choosing among different sensitivity models. Design sensitivities for the marginal and variance-based models are parameterized very differently and are not directly comparable.

As such, throughout the paper and analysis, we have restricted attention to maximizing the design sensitivity for a \textit{fixed} set of sensitivity models. We refer readers to \cite{huang2022variance} for more discussion about comparing the performance of different sensitivity models, as well as \cite{rosenbaum2015bahadur} for discussion on comparisons of different sensitivity models below the design sensitivity threshold. In general, we recommend that practitioners decide \textit{a priori} which sensitivity model best captures the type of unobserved confounder that concerns them most from a substantive perspective and pursue design sensitivity under that model. 

The interpretation of design sensitivity magnitude depends on the underlying sensitivity model. As such, determining whether a design sensitivity is large or small requires researchers to reason about whether the amount of confounding represented by $\tilde \Gamma$ is plausible. We recommend the use of existing tools for sensitivity analysis such as formal benchmarking \citep[e.g.,][]{Cinelli2020, huang2022variance, huang2022sensitivity} and amplification  \citep{soriano2021interpretable} to help with interpretation. 

\section{Conclusion}  
\label{sec:concl}
In this paper, we introduced \textit{design sensitivity} for weighted estimators. This asymptotic measure of robustness allows researchers to consider how certain design choices in their observational studies can affect sensitivity to omitted confounders at the design stage. Design sensitivity can be estimated for a general set of sensitivity models that meet a relatively weak set of regularity conditions. We derive the design sensitivities for two commonly used sensitivity models: the variance-based sensitivity model, which constrains an average error from omitting a confounder, and the marginal sensitivity model, which constrains a worst-case error from omitting a confounder. We show that trimming and augmentation—two common design choices that researchers make in practice—can influence design sensitivity. Thus, beyond the standard discussions of variance reduction for trimming and double robustness for augmentation, these decisions can also impact robustness to omitted confounders. We illustrate our framework in a study of the 2016 Colombian peace agreement, for which trimming drastically improves design sensitivity under the variance-based sensitivity model, and augmentation improves design sensitivity under the marginal sensitivity model. 

Several lines of future work follow naturally. While we provide explicit calculations for two commonly used sensitivity models, the framework introduced applies more generally. An interesting avenue of future work could compare how design sensitivities across a wider array of different sensitivity models respond to design choices. Future work could also examine design sensitivity with respect to other design choices, including the choice between ATT, ATE, and quantile effect estimands \citep{greifer2021choosing}.

Finally, design sensitivity is defined in the context of the asymptotic limiting distributions. This formulation allows us to disentangle uncertainty from sampling error from uncertainty from omitted confounding.  However, it may not provide clear guidance in small-sample settings or in settings where several designs exhibit similar design sensitivities. In the context of matching, \citet{rosenbaum2015bahadur} computes Bahadur efficiencies of sensitivity analyses as a way to compare robustness of different design specifications with more granularity than design sensitivity can provide. Future work should explore the potential of this approach for weighting estimators. 
\bibliographystyle{chicago} 
\bibliography{references}

\begin{thebibliography}{}

\bibitem[\protect\citeauthoryear{Bang and Robins}{Bang and
  Robins}{2005}]{bang2005doubly}
Bang, H. and J.~M. Robins (2005).
\newblock Doubly robust estimation in missing data and causal inference models.
\newblock {\em Biometrics\/}~{\em 61\/}(4), 962--973.

\bibitem[\protect\citeauthoryear{Ben-Michael, Feller, Hirshberg, and
  Zubizarreta}{Ben-Michael et~al.}{2021}]{ben2021balancing}
Ben-Michael, E., A.~Feller, D.~A. Hirshberg, and J.~R. Zubizarreta (2021).
\newblock The balancing act in causal inference.
\newblock {\em arXiv preprint arXiv:2110.14831\/}.

\bibitem[\protect\citeauthoryear{Chattopadhyay and Zubizarreta}{Chattopadhyay
  and Zubizarreta}{2021}]{chattopadhyay2021implied}
Chattopadhyay, A. and J.~R. Zubizarreta (2021).
\newblock On the implied weights of linear regression for causal inference.
\newblock {\em arXiv preprint arXiv:2104.06581\/}.

\bibitem[\protect\citeauthoryear{Cinelli and Hazlett}{Cinelli and
  Hazlett}{2020}]{Cinelli2020}
Cinelli, C. and C.~Hazlett (2020).
\newblock {Making Sense of Sensitivity: Extending Omitted Variable Bias}.
\newblock {\em Journal of the Royal Statistical Society Series B\/}~{\em
  82\/}(1), 39--67.

\bibitem[\protect\citeauthoryear{Crump, Hotz, Imbens, and Mitnik}{Crump
  et~al.}{2009}]{crump2009dealing}
Crump, R.~K., V.~J. Hotz, G.~W. Imbens, and O.~A. Mitnik (2009).
\newblock Dealing with limited overlap in estimation of average treatment
  effects.
\newblock {\em Biometrika\/}~{\em 96\/}(1), 187--199.

\bibitem[\protect\citeauthoryear{Ding and VanderWeele}{Ding and
  VanderWeele}{2016}]{ding2016sensitivity}
Ding, P. and T.~J. VanderWeele (2016).
\newblock Sensitivity analysis without assumptions.
\newblock {\em Epidemiology (Cambridge, Mass.)\/}~{\em 27\/}(3), 368.

\bibitem[\protect\citeauthoryear{Dorn and Guo}{Dorn and
  Guo}{2021}]{dorn2021sharp}
Dorn, J. and K.~Guo (2021).
\newblock Sharp sensitivity analysis for inverse propensity weighting via
  quantile balancing.
\newblock {\em arXiv preprint arXiv:2102.04543\/}.

\bibitem[\protect\citeauthoryear{Greifer and Stuart}{Greifer and
  Stuart}{2021}]{greifer2021choosing}
Greifer, N. and E.~A. Stuart (2021).
\newblock Choosing the estimand when matching or weighting in observational
  studies.
\newblock {\em arXiv preprint arXiv:2106.10577\/}.

\bibitem[\protect\citeauthoryear{Hainmueller}{Hainmueller}{2012}]{hainmueller2012entropy}
Hainmueller, J. (2012).
\newblock Entropy balancing for causal effects: A multivariate reweighting
  method to produce balanced samples in observational studies.
\newblock {\em Political analysis\/}~{\em 20\/}(1), 25--46.

\bibitem[\protect\citeauthoryear{H{\'a}jek}{H{\'a}jek}{1971}]{hajek1971comment}
H{\'a}jek, J. (1971).
\newblock Comment on “an essay on the logical foundations of survey sampling,
  part one”.
\newblock {\em The foundations of survey sampling\/}~{\em 236}.

\bibitem[\protect\citeauthoryear{Hartman and Huang}{Hartman and
  Huang}{2022}]{hartman2022sensitivity}
Hartman, E. and M.~Huang (2022).
\newblock Sensitivity analysis for survey weights.
\newblock {\em arXiv preprint arXiv:2206.07119\/}.

\bibitem[\protect\citeauthoryear{Hazlett and Parente}{Hazlett and
  Parente}{2023}]{hazlett2023unconfounded}
Hazlett, C. and F.~Parente (2023).
\newblock From" is it unconfounded?" to" how much confounding would it take?":
  Applying the sensitivity-based approach to assess causes of support for peace
  in colombia.
\newblock {\em Journal of Politics (forthcoming)\/}.

\bibitem[\protect\citeauthoryear{Heller, Rosenbaum, and Small}{Heller
  et~al.}{2009}]{heller2009split}
Heller, R., P.~R. Rosenbaum, and D.~S. Small (2009).
\newblock Split samples and design sensitivity in observational studies.
\newblock {\em Journal of the American Statistical Association\/}~{\em
  104\/}(487), 1090--1101.

\bibitem[\protect\citeauthoryear{Hirshberg, Maleki, and Zubizarreta}{Hirshberg
  et~al.}{2019}]{hirshberg2019minimax}
Hirshberg, D.~A., A.~Maleki, and J.~R. Zubizarreta (2019).
\newblock Minimax linear estimation of the retargeted mean.
\newblock {\em arXiv preprint arXiv:1901.10296\/}.

\bibitem[\protect\citeauthoryear{Howard and Pimentel}{Howard and
  Pimentel}{2021}]{howard2021uniform}
Howard, S.~R. and S.~D. Pimentel (2021).
\newblock The uniform general signed rank test and its design sensitivity.
\newblock {\em Biometrika\/}~{\em 108\/}(2), 381--396.

\bibitem[\protect\citeauthoryear{Hsu, Small, and Rosenbaum}{Hsu
  et~al.}{2013}]{hsu2013effect}
Hsu, J.~Y., D.~S. Small, and P.~R. Rosenbaum (2013).
\newblock Effect modification and design sensitivity in observational studies.
\newblock {\em Journal of the American Statistical Association\/}~{\em
  108\/}(501), 135--148.

\bibitem[\protect\citeauthoryear{Huang}{Huang}{2022}]{huang2022sensitivity}
Huang, M. (2022).
\newblock Sensitivity analysis in the generalization of experimental results.
\newblock {\em arXiv preprint arXiv:2202.03408\/}.

\bibitem[\protect\citeauthoryear{Huang and Pimentel}{Huang and
  Pimentel}{2022}]{huang2022variance}
Huang, M. and S.~D. Pimentel (2022).
\newblock Variance-based sensitivity analysis for weighting estimators result
  in more informative bounds.
\newblock {\em arXiv preprint arXiv:2208.01691\/}.

\bibitem[\protect\citeauthoryear{Ishikawa and He}{Ishikawa and
  He}{2023}]{ishikawa2023kernel}
Ishikawa, K. and N.~He (2023).
\newblock Kernel conditional moment constraints for confounding robust
  inference.
\newblock {\em arXiv preprint arXiv:2302.13348\/}.

\bibitem[\protect\citeauthoryear{Jin, Ren, and Zhou}{Jin
  et~al.}{2022}]{jin2022sensitivity}
Jin, Y., Z.~Ren, and Z.~Zhou (2022).
\newblock Sensitivity analysis under the $ f $-sensitivity models: Definition,
  estimation and inference.
\newblock {\em arXiv preprint arXiv:2203.04373\/}.

\bibitem[\protect\citeauthoryear{Kang and Schafer}{Kang and
  Schafer}{2007}]{kang2007demystifying}
Kang, J.~D. and J.~L. Schafer (2007).
\newblock Demystifying double robustness: A comparison of alternative
  strategies for estimating a population mean from incomplete data.

\bibitem[\protect\citeauthoryear{Rosenbaum}{Rosenbaum}{2004}]{rosenbaum2004design}
Rosenbaum, P.~R. (2004).
\newblock Design sensitivity in observational studies.
\newblock {\em Biometrika\/}~{\em 91\/}(1), 153--164.

\bibitem[\protect\citeauthoryear{Rosenbaum}{Rosenbaum}{2005}]{rosenbaum2005heterogeneity}
Rosenbaum, P.~R. (2005).
\newblock Heterogeneity and causality: Unit heterogeneity and design
  sensitivity in observational studies.
\newblock {\em The American Statistician\/}~{\em 59\/}(2), 147--152.

\bibitem[\protect\citeauthoryear{Rosenbaum}{Rosenbaum}{2007}]{rosenbaum2007confidence}
Rosenbaum, P.~R. (2007).
\newblock Confidence intervals for uncommon but dramatic responses to
  treatment.
\newblock {\em Biometrics\/}~{\em 63\/}(4), 1164--1171.

\bibitem[\protect\citeauthoryear{Rosenbaum}{Rosenbaum}{2010a}]{rosenbaum2010designbook}
Rosenbaum, P.~R. (2010a).
\newblock {\em Design of observational studies}, Volume~10.
\newblock Springer.

\bibitem[\protect\citeauthoryear{Rosenbaum}{Rosenbaum}{2010b}]{rosenbaum2010design}
Rosenbaum, P.~R. (2010b).
\newblock Design sensitivity and efficiency in observational studies.
\newblock {\em Journal of the American Statistical Association\/}~{\em
  105\/}(490), 692--702.

\bibitem[\protect\citeauthoryear{Rosenbaum}{Rosenbaum}{2011}]{rosenbaum2011aspects}
Rosenbaum, P.~R. (2011).
\newblock What aspects of the design of an observational study affect its
  sensitivity to bias from covariates that were not observed?
\newblock In {\em Looking Back: Proceedings of a Conference in Honor of Paul W.
  Holland}, pp.\  87--114. Springer.

\bibitem[\protect\citeauthoryear{Rosenbaum}{Rosenbaum}{2015}]{rosenbaum2015bahadur}
Rosenbaum, P.~R. (2015).
\newblock Bahadur efficiency of sensitivity analyses in observational studies.
\newblock {\em Journal of the American Statistical Association\/}~{\em
  110\/}(509), 205--217.

\bibitem[\protect\citeauthoryear{Rubin}{Rubin}{1980}]{rubin1980SUTVA}
Rubin, D.~B. (1980).
\newblock {Discussion of `Randomization analysis of experimental data: The
  Fisher randomization test comment' by Basu}.
\newblock {\em Journal of the American Statistical Association\/}~{\em
  75\/}(371), 591--593.

\bibitem[\protect\citeauthoryear{Rubin}{Rubin}{2007}]{rubin2007design}
Rubin, D.~B. (2007).
\newblock The design versus the analysis of observational studies for causal
  effects: parallels with the design of randomized trials.
\newblock {\em Statistics in medicine\/}~{\em 26\/}(1), 20--36.

\bibitem[\protect\citeauthoryear{Soriano, Ben-Michael, Bickel, Feller, and
  Pimentel}{Soriano et~al.}{2021}]{soriano2021interpretable}
Soriano, D., E.~Ben-Michael, P.~J. Bickel, A.~Feller, and S.~D. Pimentel
  (2021).
\newblock Interpretable sensitivity analysis for balancing weights.
\newblock {\em arXiv preprint arXiv:2102.13218\/}.

\bibitem[\protect\citeauthoryear{Tan}{Tan}{2006}]{tan2006distributional}
Tan, Z. (2006).
\newblock A distributional approach for causal inference using propensity
  scores.
\newblock {\em Journal of the American Statistical Association\/}~{\em
  101\/}(476), 1619--1637.

\bibitem[\protect\citeauthoryear{Tan}{Tan}{2007}]{tan2007comment}
Tan, Z. (2007).
\newblock Comment: Understanding or, ps and dr.
\newblock {\em Statistical Science\/}~{\em 22\/}(4), 560--568.

\bibitem[\protect\citeauthoryear{Van~der Vaart}{Van~der
  Vaart}{2000}]{van2000asymptotic}
Van~der Vaart, A.~W. (2000).
\newblock {\em Asymptotic statistics}, Volume~3.
\newblock Cambridge university press.

\bibitem[\protect\citeauthoryear{VanderWeele and Ding}{VanderWeele and
  Ding}{2017}]{vanderweele2017sensitivity}
VanderWeele, T.~J. and P.~Ding (2017).
\newblock Sensitivity analysis in observational research: introducing the
  e-value.
\newblock {\em Annals of internal medicine\/}~{\em 167\/}(4), 268--274.

\bibitem[\protect\citeauthoryear{Wang and Zubizarreta}{Wang and
  Zubizarreta}{2020}]{wang2020minimal}
Wang, Y. and J.~R. Zubizarreta (2020).
\newblock Minimal dispersion approximately balancing weights: asymptotic
  properties and practical considerations.
\newblock {\em Biometrika\/}~{\em 107\/}(1), 93--105.

\bibitem[\protect\citeauthoryear{Yang and Ding}{Yang and
  Ding}{2018}]{yang2018asymptotic}
Yang, S. and P.~Ding (2018).
\newblock Asymptotic inference of causal effects with observational studies
  trimmed by the estimated propensity scores.
\newblock {\em Biometrika\/}~{\em 105\/}(2), 487--493.

\bibitem[\protect\citeauthoryear{Zhang and Zhao}{Zhang and
  Zhao}{2022}]{zhang2022bounds}
Zhang, Y. and Q.~Zhao (2022).
\newblock Bounds and semiparametric inference in $l_\infty$ and
  $l_2$-sensitivity analysis for observational studies.
\newblock {\em arXiv preprint arXiv:2211.04697\/}.

\bibitem[\protect\citeauthoryear{Zhao, Small, and Bhattacharya}{Zhao
  et~al.}{2019}]{zhao2019sensitivity}
Zhao, Q., D.~S. Small, and B.~B. Bhattacharya (2019).
\newblock Sensitivity analysis for inverse probability weighting estimators via
  the percentile bootstrap.
\newblock {\em Journal of the Royal Statistical Society: Series B (Statistical
  Methodology)\/}.

\bibitem[\protect\citeauthoryear{Zubizarreta}{Zubizarreta}{2015}]{zubizarreta2015stable}
Zubizarreta, J.~R. (2015).
\newblock Stable weights that balance covariates for estimation with incomplete
  outcome data.
\newblock {\em Journal of the American Statistical Association\/}~{\em
  110\/}(511), 910--922.

\end{thebibliography}
\clearpage

\appendix
\setcounter{page}{1}
\singlespacing
\begin{center}
    \Large
    \textbf{Supplementary Materials: Design Sensitivity and Its Implications for Weighted Observational Studies}
\end{center}

\section{Proofs}
\label{sec:app_proofs}
\subsection{Theorem \ref{thm:power} (Power of a Sensitivity Analysis)} 
\label{app:pwr_sens}
For a general class of sensitivity models $\nu(\Gamma, w)$, define $\tau_{\nu(\Gamma, w)}$ as the minimum value in the set of possible point estimates (i.e., $\tau_{\nu(\Gamma, w)} := \inf_{\tilde w \in \nu(\Gamma, w)} \tau(\tilde w)$). Define $\xi_{\nu(\Gamma, w)} := \tau(w) - \tau_{\nu(\Gamma,w)}$. Finally, define $k_\alpha := 1- \Phi(\alpha)$. Then, the power of a sensitivity analysis is defined as: 

\begin{align}
\Pr\left(  \frac{\sqrt{n}(\hat \tau(\hat w) - \xi_{\nu(\Gamma, w)}) }{ \sigma_{\nu(\Gamma, w)}} \geq k_\alpha\right) 
&= \Pr\left( \frac{\sqrt{n} \cdot ( \hat \tau(\hat w) - \tau(w))}{ \sigma_{W}} \geq \frac{k_\alpha \cdot \sigma_{\nu(\Gamma, w)} + \sqrt{n} \cdot (\xi_{\nu(\Gamma, w)}-\tau(w) )}{\sigma_{W}}\right) \nonumber \\
&\simeq 1 - \Phi \left( \frac{k_\alpha \cdot \sigma_{\nu(\Gamma, w)} + \sqrt{n} \cdot (\xi_{\nu(\Gamma, w)} - \tau(w))}{\sigma_{W}} \right),\nonumber  
\end{align} 
\begin{proof} 
\begin{align*} 
\Pr\left( \frac{\sqrt{n}(\hat \tau(\hat w) - \xi_{\nu(\Gamma, w)}}{ \sigma_{\nu(\Gamma, w)}} \geq k_\alpha\right)  
&= \Pr\left( \frac{\sqrt{n}(\hat \tau(\hat w) - \xi_{\nu(\Gamma, w)})}{\sigma(w)} \geq k_\alpha \cdot \frac{\sigma_{\nu(\Gamma, w)}}{\sigma(w)} \right) 
\intertext{Adding and subtracting $\sqrt{n} \tau(w)/\sigma(w)$ to both sides results in the following:}
&=  \Pr\left( \frac{\sqrt{n} \cdot ( \hat \tau(\hat w) - \tau(w))}{ \sigma_{W}} \geq \frac{k_\alpha \cdot \sigma_{\nu(\Gamma, w)} + \sqrt{n} \cdot (\xi_{\nu(\Gamma, w)}-\tau(w) )}{\sigma_{W}}\right) 
\intertext{Noting that $\hat \tau(\hat w) \cid N(\tau(w), \sigma(w)^2)$ concludes the proof:}
&\simeq 1 - \Phi \left( \frac{k_\alpha \cdot \sigma_{\nu(\Gamma, w)} + \sqrt{n} \cdot (\xi_{\nu(\Gamma, w)} - \tau(w))}{\sigma_{W}} \right)
\end{align*} 
\end{proof} 

\noindent \emph{Remark.}  Theorem \ref{thm:ds} relies on the asymptotic normality of the weighted estimator $\hat \tau(\hat w)$.  Several authors discuss assumptions sufficient to guarantee this.  \citet{zhao2019sensitivity} and \citet{huang2022variance} assume that the true propensity score obeys a logistic regression model in the observed covariates with parameter $\beta$ and is estimated via maximum likelihood. In this case the following conditions suffice, letting $\theta \in \Theta$ be a vector of parameters that includes $\beta$ as well as first and second moments for the outcome variables and the weights and the population weighted outcome mean.
\begin{assumption}[Regularity Conditions] \label{assump:regularity_conds}
Assume that the parameter space $\Theta$ is compact, and that $\theta$ is in the interior of $\Theta$. Furthermore, $(Y, X)$ satisfies the following: 
\begin{enumerate} 
\item $\E(Y^4) <\infty$
\item $\det \left( \E \left( \frac{\exp(\beta^\top X)}{(1+\exp(\beta^\top X))^2} X X^\top \right) \right) > 0$
\item $\forall$ compact subsets $S \subset \R^d$, $\E(\sup_{\beta \in S} \exp(\beta^\top X) ) < \infty$
\end{enumerate} 
\end{assumption} 
\citet[Assumption 4]{soriano2021interpretable}
discuss more general assumptions appropriate for balancing weights approaches that do not rely on a specific parametric model;  \citet[Theorem 3]{wang2020minimal} and \citet[Theorem 2]{hirshberg2019minimax} offer further versions of such assumptions.

\subsection{Theorem \ref{thm:design_sensitivity_vbm} (Design Sensitivity for the Variance-Based Sensitivity Model)} 
\label{subsec:ds_vbm}
\begin{proof} 
As an overview of the proof, we will first show that $\sigma_{\nu_{vbm}(R^2, w)} < \infty$. Then, we will invoke the results from Theorem \ref{thm:power} to solve for the $R^2$ parameter for which $\xi_{\nu_{vbm}(R^2, w)}$ is equal to $\tau(w)$. In order for $\sigma_{\nu_{vbm}(R^2,w)} < \infty$, the endpoints of the range of potential point estimates under the variance-based sensitivity model must have finite variance. This is a secondary result, proven in \cite{huang2022variance}, Theorem 3.2.

Because $\sigma_{\nu_{vbm}(R^2, w)} < \infty$, the results for the theorem follow almost immediately from Theorem \ref{thm:power}. Recall that the design sensitivity is defined as the minimum parameter value for a set of sensitivity models $\nu$ for which $\xi_{\nu(\Gamma, w)} > \tau(w)$ (i.e., when $\sqrt{n}(\xi_{\nu(\Gamma, w)} - \tau(w)) =0$). To solve for the design sensitivity for the variance-based sensitivity model, we begin by noting that the error term $\xi_{\nu_{vbm}(R^2, w)}$ is equal to the maximum bias for a set of sensitivity models. Following \cite{huang2022variance}, the maximum asymptotic bias that can occur is given by: 
\begin{align*} 
\xi&_{\nu_{vbm}(R^2, w)} \\
&:= \max_{\tilde w \in \nu_{vbm}(R^2)} \text{Bias}(\tau(w) \mid \tilde w) \\
&= \sqrt{1-\cor(w, Y \mid Z = 0)^2} \cdot \sqrt{\frac{R^2}{1-R^2} \cdot \var(w \mid Z = 0) \cdot \var(Y \mid Z = 0)}.
\end{align*} 

\noindent To solve for $\tilde R^2$, we set $\xi_{\nu_{vbm}(R^2, w)}$ equal to $\tau(w)$: 
\begin{align*} 
\sqrt{1-\cor(w, Y \mid Z = 0)^2} \cdot \sqrt{\frac{\tilde R^2}{1-\tilde R^2} \cdot \var(w \mid Z = 0) \cdot \var(Y \mid Z = 0)} = \tau(w) \\
\frac{\tilde R^2}{1-\tilde R^2} = \frac{1}{1-\cor(w, Y \mid Z = 0)^2} \cdot \frac{\tau^2_W}{\var(w \mid Z = 0) \mid \var(Y \mid Z = 0)}\\
\tilde R^2 = \frac{a^2}{1+a^2} \text{ where } a^2 = \frac{1}{1-\cor(w, Y \mid Z = 0)^2} \cdot \frac{\tau(w)^2}{\var(w \mid Z = 0) \cdot \var(Y \mid Z = 0)}
\end{align*} 
\end{proof} 

\subsection{Theorem \ref{thm:design_sensitivity_msm} (Design Sensitivity for the Marginal Sensitivity Model)}
\label{sec:proof_msm}

\begin{proof}
For a fixed value of $\Lambda$, we reject the null hypothesis that $\tau = 0$ if
\begin{align*}
    &\underset{n \to \infty}{\lim} \min_{\tilde w \in \nu_{msm}(\Lambda,w)} \hat \tau_{\tilde w} > 0 \\
    \iff &\underset{n \to \infty}{\lim} \left[\frac{1}{\sum_{i=1}^n Z_i} \sum_{i=1}^n Y_i Z_i - \max_{\tilde w \in \nu_{msm}(\Lambda,w)} \frac{\sum_{i=1}^n \tilde w_i Y_i (1-Z_i)}{\sum_{i=1}^n \tilde w_i (1-Z_i)} \right] > 0 \\
    \iff & \underset{n \to \infty}{\lim} \frac{1}{\sum_{i=1}^n Z_i} \sum_{i=1}^n Y_i Z_i  > 
    \underset{n \to \infty}{\lim} \max_{\tilde w \in \nu_{msm}(\Lambda,w)} \frac{\sum_{i=1}^n \tilde w_i Y_i (1-Z_i)}{\sum_{i=1}^n \tilde w_i (1-Z_i)}.
\end{align*}
Therefore, we can compute the design sensitivity $\Tilde{\Lambda}$ by finding $\Lambda$ such that
\begin{align}
    &\underset{n \to \infty}{\lim} \frac{1}{\sum_{i=1}^n Z_i} \sum_{i=1}^n Y_i Z_i  = 
    \underset{n \to \infty}{\lim} \max_{\tilde w \in \nu_{msm}(\Lambda,w)} \frac{\sum_{i=1}^n \tilde w_i Y_i (1-Z_i)}{\sum_{i=1}^n \tilde w_i (1-Z_i)}. \label{eq:msm_des_sens_est_eq}
\end{align}

The term on the left hand side of the estimating equation \eqref{eq:msm_des_sens_est_eq} is the observed data sample mean of $Y(1)$ and is equal to $\mathbb{E}\left[Y(1) \mid Z = 1\right]$ by the law of large numbers. We focus on showing that the right hand side limit exists and computing its value.

For notational simplicity, let $\hat{\mu}_0(\tilde w) := \frac{\sum_{i=1}^n \tilde w_i Y_i (1-Z_i)}{\sum_{i=1}^n \tilde w_i (1-Z_i)}$. Without loss of generality, let the first $m$ units be control units such that $Z_1 = \cdots = Z_m = 0, Z_{m+1} = \cdots = Z_n = 1$, where $1 \leq m < n$. Additionally, let $Y$ be ordered from largest to smallest such that $Y_1 \geq Y_2 \geq \cdots \geq Y_m$ and let $Y_i = 0$ for $i \notin \{1,\ldots,m\}$. Then, by Proposition 2 from \citet{zhao2019sensitivity},
\begin{align*}
    \max_{\tilde w \in \nu_{msm}(\Lambda,w)} \hat{\mu}_0(\tilde w) 
    &= \underset{a\in \{0,\ldots,m\}}{\max} \; \frac{\sum\limits_{i = \min\{a,1\}}^a\Lambda w_iY_i + \sum\limits_{i = \min\{a+1,m+1\}}^{\max\{a+1,m\}} \frac{1}{\Lambda} w_iY_i}{\sum\limits_{i = \min\{a,1\}}^a\Lambda w_i + \sum\limits_{i = \min\{a+1,m+1\}}^{\max\{a+1,m\}}\frac{1}{\Lambda} w_i} \\
    &= \underset{c\in \mathbb{R}}{\max} \; \frac{ \sum\limits_{i =1}^m \Lambda \mathbbm{1}\left\{Y_i \geq c \right\} w_iY_i + \sum\limits_{i =1}^m \frac{1}{\Lambda} \mathbbm{1}\left\{Y_i < c \right\} w_iY_i}{ \sum\limits_{i =1}^m \Lambda \mathbbm{1}\left\{Y_i \geq c \right\} w_i +  \sum\limits_{i =1}^m \frac{1}{\Lambda} \mathbbm{1}\left\{Y_i < c \right\} w_i}.
\end{align*}

The following lemma allows us to state the limit of $\underset{\tilde w \in \nu_{msm}(\Lambda,w)} {\max}\hat{\mu}_0(\tilde w)$.

\begin{lemma}[Limit of $\underset{\tilde w \in \nu_{msm}(\Lambda,w)} {\max}\hat{\mu}_0(\tilde w)$]\label{lemma:mu_0_lim}\mbox{}\\
Under Assumption \ref{assumption:overlap} and $\mathbb{E}\left(Y_i^2\right), \mathbb{E}\left(w_i^2\right) < \infty$,
\begin{align}
    \max_{\tilde w \in \nu_{msm}(\Lambda,w)} \hat{\mu}_0(\tilde w) = \;
    & \underset{c\in \mathbb{R}}{\max} \; \frac{ \sum\limits_{i =1}^m \Lambda \mathbbm{1}\left\{Y_i \geq c \right\} w_iY_i + \sum\limits_{i =1}^m \frac{1}{\Lambda} \mathbbm{1}\left\{Y_i < c \right\} w_iY_i}{ \sum\limits_{i =1}^m \Lambda \mathbbm{1}\left\{Y_i \geq c \right\} w_i +  \sum\limits_{i =1}^m \frac{1}{\Lambda} \mathbbm{1}\left\{Y_i < c \right\} w_i} \label{eq:max_lemma} \\
    \overset{p}{\to} \;& \underset{c\in \mathbb{R}}{\max} \; \frac{  \Lambda \mathbb{E}\left[\mathbbm{1}\left\{Y(0) \geq c \right\} w Y(0)|Z=0\right] + \frac{1}{\Lambda} \mathbb{E}\left[\mathbbm{1}\left\{Y(0) < c \right\} w Y(0)|Z=0\right]}{  \Lambda \mathbb{E}\left[\mathbbm{1}\left\{Y(0) \geq c \right\} w |Z=0\right] + \frac{1}{\Lambda} \mathbb{E}\left[\mathbbm{1}\left\{Y(0) < c \right\} w |Z=0\right]} \label{eq:lim_lemma}
\end{align}
\end{lemma}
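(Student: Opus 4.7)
The plan is to show that the empirical maximand converges uniformly in $c$ to its population analog, and then apply a continuous mapping argument to transfer convergence of the objective to convergence of the supremum. First I would reparametrize the supremum: the sums depend on $c$ only through $\mathbbm{1}\{Y_i \geq c\}$, so the empirical objective is piecewise constant in $c$ with break points at the order statistics of $Y_1,\dots,Y_m$. Equivalently, writing $\theta \in [0,1]$ for the proportion of control units with $Y_i \geq c$ (matching the parametrization used in Theorem \ref{thm:design_sensitivity_msm}), the supremum reduces to a maximum over a compact index set; in the population one can similarly restrict $c$ to a bounded quantile interval since the objective is constant outside a compact range.

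For each fixed $c$, the law of large numbers delivers
\begin{equation*}
\frac{1}{m}\sum_{i=1}^m \mathbbm{1}\{Y_i \geq c\}\, w_i Y_i \;\cip\; \mathbb{E}\bigl[\mathbbm{1}\{Y(0) \geq c\}\, w\, Y(0) \mid Z = 0\bigr],
\end{equation*}
and analogously for the three remaining sums; the integrability needed is $\mathbb{E}|wY|, \mathbb{E}|w| < \infty$, both of which follow from the hypotheses $\mathbb{E}(Y^2), \mathbb{E}(w^2) < \infty$ via Cauchy--Schwarz. To upgrade from pointwise to uniform-in-$c$ convergence, I would invoke a Glivenko--Cantelli argument: the indicator class $\{\mathbbm{1}\{y \geq c\}: c \in \mathbb{R}\}$ is VC of index $2$, hence the classes $\{y \mapsto \mathbbm{1}\{y \geq c\}\, w\, y\}$ and $\{y \mapsto \mathbbm{1}\{y \geq c\}\, w\}$ are Glivenko--Cantelli with integrable envelopes $|wY|$ and $w$, yielding uniform-in-$c$ convergence of all four sums.

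With uniform convergence of both numerator and denominator, the continuous mapping theorem gives uniform convergence of the ratio to its population counterpart, provided the denominator is uniformly bounded away from zero. This is where Assumption \ref{assumption:overlap} enters: overlap implies $w$ is essentially bounded, and for any $c$ the denominator is at least $\min(\Lambda, 1/\Lambda)\, \mathbb{E}[w \mid Z = 0] > 0$ in the limit (since the two indicator-weighted expectations together cover all of the conditional mass on $Y(0)$). Finally, since the supremum in $c$ can be restricted (with probability tending to one) to a compact quantile interval, uniform convergence of the ratio on that compact set transfers directly to convergence of its supremum, establishing \eqref{eq:lim_lemma}.

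The main obstacle I anticipate is the uniform-in-$c$ step and the handling of the supremum over an a priori non-compact index set. The VC/Glivenko--Cantelli machinery resolves the uniformity cleanly under only the stated second-moment assumptions, while the reparametrization via the quantile level $\theta \in [0,1]$ used in Theorem \ref{thm:design_sensitivity_msm} effectively compactifies the index set. The one remaining technicality is the uniform positive lower bound on the denominator, which is delivered by overlap together with the observation that at least one of the two indicator-weighted terms contributes a nonvanishing mass for every $c$.
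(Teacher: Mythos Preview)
Your proposal is correct and follows the same high-level architecture as the paper's proof: decompose the empirical objective into four sums indexed by $c$, establish Glivenko--Cantelli for each, and then pass uniform convergence through the ratio to the supremum. The main technical difference is in how Glivenko--Cantelli is obtained. The paper constructs explicit $\epsilon$-brackets for the class $\{(y,w)\mapsto \mathbbm{1}\{y\geq c\}wy\}$ by truncating the tails of $Y$ at levels $M_\epsilon^\pm$ chosen so the tail contribution to $\mathbb{E}|wY|$ is small, partitioning the remaining quantile range into $k$ pieces, and bounding bracket sizes via Cauchy--Schwarz; it then treats point masses of $Y$ separately. Your route via the VC property of $\{\mathbbm{1}\{y\geq c\}\}$ combined with the integrable envelope $|wY|$ is shorter and equally valid; it avoids the case split between continuous and discrete $Y$ and delivers the uniform law directly. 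Conversely, the paper's bracketing argument is more self-contained and does not appeal to VC machinery.

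Two small points are worth tightening. First, your remark that ``overlap implies $w$ is essentially bounded'' is not accurate under Assumption~\ref{assumption:overlap} as stated (which only requires $0<\Pr(Z=1\mid X)<1$, not strict bounds away from $0$ and $1$); fortunately you do not actually need boundedness, since your denominator lower bound $\Lambda^{-1}\mathbb{E}[w\mid Z=0]>0$ follows simply from $w>0$ a.s.\ and $\mathbb{E}(w^2)<\infty$. Second, the compactification step is unnecessary once you have uniform-in-$c$ convergence over all of $\mathbb{R}$: the elementary inequality $|\sup_c f_n(c)-\sup_c f(c)|\leq \sup_c|f_n(c)-f(c)|$ already transfers convergence of the objective to convergence of the supremum. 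Your ratio step, with the explicit uniform lower bound on the denominator, is in fact cleaner than the paper's closing paragraph, which handles this combination somewhat informally.
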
 
\begin{proof} 

We break $\underset{\tilde w \in \nu_{msm}(\Lambda,w)} {\max}\hat{\mu}_0(\tilde w)$ into four functions and show that each function converges uniformly to its corresponding expectation. As a result, $\underset{\tilde w \in \nu_{msm}(\Lambda,w)} {\max}\hat{\mu}_0(\tilde w)$ converges to its expectation. First, let

\begin{align*}
    \underset{\tilde w \in \nu_{msm}(\Lambda,w)} {\max}\hat{\mu}_0(\tilde w) = \underset{c\in \mathbb{R}}{\max} \; \frac{ \Lambda \overline{g}_1(c) + \frac{1}{\Lambda} \overline{g}_2(c) }{ \Lambda \overline{g}_3(c) + \frac{1}{\Lambda} \overline{g}_4(c) },
\end{align*}
where $\overline{g}_t(c) = \frac{1}{m}\sum\limits_{i =1}^m g_t(Y_i,w_i;c)$ for $t \in\{ 1, 2,3,4\}$ and
\begin{enumerate}
    \item ${g}_1(Y_i,w_i;c) = \mathbbm{1}\left\{Y_i \geq c \right\} w_iY_i$
    \item ${g}_2(Y_i,w_i;c) =  \mathbbm{1}\left\{Y_i < c \right\} w_iY_i$
    \item ${g}_3(Y_i,w_i;c) =  \mathbbm{1}\left\{Y_i \geq c \right\} w_i$
    \item ${g}_4(Y_i,w_i;c) =   \mathbbm{1}\left\{Y_i < c \right\} w_i$.
\end{enumerate}
We show that the class of functions $\mathcal{F} = \{g_1(y,w;c): c \in \mathbb{R} \cup \{-\infty, \infty\}\}$, each element of which maps $(Y_i,w_i)$ to the real line, is Glivenko-Cantelli and therefore $\overline{g}_1(c)$ converges uniformly to its expectation, $\mathbb{E}\left[\mathbbm{1}\left\{Y(0) \geq c \right\} w Y(0)|Z=0\right]$. A similar result can be shown for $\overline{g}_2(c)$, $\overline{g}_3(c)$, and $\overline{g}_4(c)$.  To clarify our exposition, we focus initially on the case in which the distribution of $Y$ is continuous.

Let $P$ be the probability distribution from which $(Y_1, w_1), ..., (Y_m, w_m)$ is a random sample and let $F$ be the cdf of $Y$. Choose any $\epsilon > 0$. By our assumptions and the Cauchy-Schwarz theorem, $\mathbb{E}|Y_iw_i| < \infty$; therefore, there exist constants $M^-_\epsilon$ and $M^+_\epsilon$ sufficiently large such that $\mathbb{E}\left[|Y_iw_i|\mathbbm{1}\{Y_i < -M^-_\epsilon\}\right]< \epsilon$ and $\mathbb{E}\left[|Y_iw_i|\mathbbm{1}\{Y_i > M^+_\epsilon\}\right]  < \epsilon$.  Define $p^-_\epsilon = F(-M_\epsilon^-)$ and $p^+_\epsilon = F(M_\epsilon^+)$, and let  $\Delta_\epsilon^- = F(0)-p^-_\epsilon$ and $\Delta_\epsilon^+ =p^+_\epsilon - F(0)$. 
Finally, choose any $k \in \mathbb{N}$.

Define functions $f^-_j = \mathbbm{1}\left\{Y_i \geq  F^{-1}\left(p^-_\epsilon + \frac{j\Delta^-_\epsilon}{k}\right) \right\} w_iY_i$ and $f^+_j = \mathbbm{1}\left\{Y_i \geq F^{-1}\left(p^+_\epsilon-\frac{j\Delta^+_\epsilon}{k}\right) \right\} w_iY_i$ for $j \in \left\{ 0, \ldots, k\right\}$.  If $Y_i$ has only nonnegative (nonpositive) support, the quantities $M^-_\epsilon, p^-_\epsilon, \Delta^-_\epsilon, f^-_i$ $(M^+_\epsilon,  p^+_\epsilon, \Delta^+_\epsilon, f^+_i)$ are unnecessary.   In addition, let $\underline{f} = g_1(Y_i,w_i;-\infty) = Y_iw_i$, $\overline{f} = g_1(Y_i,w_i;\infty) = 0$, and note that $f^-_{k} = f^+_{k} = \mathbbm{1}\{Y_i \geq 0 \}w_iY_i$. 
For any two real-valued functions $\ell(Y_i,w_i), u(Y_i,w_i)$ such that $\ell(Y_i,w_i), u(Y_i,w_i)$ for all $(Y_i,w_i)$, define the bracket $[\ell, u] = \{f \in \mathcal{F}: \ell(Y_i, w_i) \leq f(Y_i, w_i) \leq  u(Y_i, w_i)\}$ as the set of all functions contained between them. 

Then the brackets
\begin{align*}
    &\left[\underline{f}, f^-_0\right], \left[f^-_0, f^-_1\right], \left[f^-_{1}, f^-_{2}\right], \ldots, \left[f^-_{k-1}, f^-_{k}\right] \text{ and}\\
   &\left[\overline{f}, f^+_0\right],\left[f^+_0, f^+_1\right], \left[f^+_{1}, f^+_{2}\right], \ldots, \left[f^+_{k-1}, f^+_{k}\right] 
\end{align*}
form a coverage of the function class $\mathcal{F}$ since every function in $\mathcal{F}$ belongs to at least one bracket.

Now that we have a set of $2(k+1)$ brackets $[\ell, u]$ that cover $\mathcal{F}$, we show that they are $\epsilon$-brackets in the sense that $P(u - \ell) < \epsilon$ for bracket where $Pf = \int f dP$.  Let $C = E(w_i^2) < \infty$. 
\begin{align}
P|f^-_0 - \underline{f}| = \mathbb{E}\left|\mathbbm{1}\left\{Y_i < M^-_\epsilon \right\}w_iY_i\right| &< \epsilon \quad \quad \text{and}\\
P|f^+_0 - \overline{f}| = \mathbb{E}\left|\mathbbm{1}\left\{Y_i > M^+_\epsilon \right\}w_iY_i\right| &< \epsilon
\end{align}
by our initial choices of $M^-_\epsilon$ and $M^+_\epsilon$.  
For $j = 1, \ldots, k$,
\begin{align}
    P|f^-_j - f^-_{j-1}| & \leq \mathbb{E}\left[\mathbbm{1}\left\{Y_i \in \left[F^{-1}\left(p^-_\epsilon + \frac{(j-1)\Delta^-_\epsilon}{k}\right), F^{-1}\left(p^-_\epsilon +\frac{j\Delta^-_\epsilon}{k}\right) \right) \right\} w_i M^-_\epsilon \right] \nonumber \\
    & \leq {C M^-_\epsilon}
    \cdot \text{Pr}\left(Y_i \in \left[F^{-1}\left(p^-_\epsilon + \frac{(j-1)\Delta^-_\epsilon}{k}\right), F^{-1}\left(p^-_\epsilon +\frac{j\Delta^-_\epsilon}{k}\right) \right) \right) \nonumber \\
    & = \frac{C M^-_\epsilon\Delta^-_\epsilon}{k} \leq \frac{C M^-_\epsilon}{k}. \label{eq:proof_fminus}
\end{align}

The second line follow from the Cauchy-Schwarz inequality.  Similarly, 
\begin{align}
       P|f^+_j - f^+_{j-1}| & \leq \mathbb{E}\left[\mathbbm{1}\left\{Y_i \in \left[F^{-1}\left(p^+_\epsilon - \frac{j\Delta^+_\epsilon}{k}\right), F^{-1}\left(p^+_\epsilon -\frac{(j-1)\Delta^+_\epsilon}{k}\right) \right) \right\} w_i M^+_\epsilon \right] \nonumber \\
    & \leq {C M^+_\epsilon}
    \cdot \text{Pr}\left(Y_i \in \left[F^{-1}\left(p^+_\epsilon - \frac{j\Delta^+_\epsilon}{k}\right), F^{-1}\left(p^+_\epsilon -\frac{(j-1)\Delta^+_\epsilon}{k}\right) \right) \right) \nonumber \\
    & = \frac{C M^+_\epsilon\Delta^+_\epsilon}{k} \leq \frac{C M^+_\epsilon}{k}. \label{eq:proof_fplus}
\end{align}
Since $k$ was chosen arbitrarily, we can select a value large enough such that $\frac{C M^-_\epsilon}{k}, \frac{C M^+_\epsilon}{k} < \epsilon$.   Therefore, by Theorem 19.4 from \citet{van2000asymptotic}, since the bracketing numbers are finite for every $\epsilon >0 $, the class of functions $\mathcal{F}$ is $P$-Glivenko-Cantelli. Since $\mathcal{F}$ is Glivenko-Cantelli, by Theorem 19.1 from \citet{van2000asymptotic}, uniform convergence holds.
 
If the distribution of $Y$ is not continuous, the above argument works until statements (\ref{eq:proof_fminus}) and (\ref{eq:proof_fplus}), which may not hold because the probability that $Y_i$ lies in a small region may still be large if a point probability mass is contained within it.  We can modify the argument to account for such point masses as follows.  Consider the set of points $\mathcal{Y}$ for which $Y_i$ has a point probability mass greater than or equal to $\epsilon/2$; this set must be finite in cardinality. Increase $M_\epsilon^+$ and $M_\epsilon^-$ so that $-M_\epsilon^- < y < M_\epsilon^+ \quad \text{for all $y \in \mathcal{Y}$}$.   Choose any $y_0 \in \mathcal{Y}$, and suppose without loss of generality that $y_0 < 0$. We can split any bracket $[f^-_j, f^-_{j+1}]$ that contains $g(Y_i,w_i; y_0)$ into the following two brackets:
\[
\left[f_j^-, \,\,  \mathbbm{1}\left\{Y_i \geq y_0 \right\} w_iY_i\right] \quad \quad \text{and} \quad \quad \left[\mathbbm{1}\left\{Y_i > y_0 \right\} w_iY_i,  \,\,f_{j+1}^- \right]
\]
Since the largest remaining probability point masses are all smaller than $\epsilon/2$, it is now possible to choose $k$ sufficiently large that each bracket $[\ell, u]$ satisfies $P|u - \ell| < \epsilon$.

If Lemma \ref{lemma:mu_0_lim} is not true, then assume that there is some value $\epsilon$ for which we can always find some $n$ such that \eqref{eq:max_lemma} and \eqref{eq:lim_lemma} are different by at least $\epsilon$. Since each of the functions $\overline{g}_1(c)$, $\overline{g}_2(c)$, $\overline{g}_3(c)$, and $\overline{g}_4(c)$ differ from their expectations by at most $\epsilon_1$, $\epsilon_2$, $\epsilon_3$, and $\epsilon_4$, respectively, we can construct a new $\epsilon$ which upper bounds the difference between \eqref{eq:max_lemma} and \eqref{eq:lim_lemma}. If there is not uniform convergence of \eqref{eq:max_lemma} and \eqref{eq:lim_lemma}, then the two terms have to be different by at least $\epsilon$. Therefore, Lemma \ref{lemma:mu_0_lim} follows by contradiction.
\end{proof}

Noting that \eqref{eq:lim_lemma} can equivalently be written with indicator functions in terms of the conditional CDF of $Y(0)$ given $Z = 0$, Theorem \ref{thm:design_sensitivity_msm} follows from Lemma \ref{lemma:mu_0_lim}.
\end{proof}

\subsubsection{Corollaries to Theorem \ref{thm:design_sensitivity_msm} for Trimmed and Augmented Weighting Estimators}
\label{sec:proof_msm_corollary}

\begin{corollary}[Design Sensitivity for the Marginal Sensitivity Model for Trimming]\label{corr:design_sensitivity_msm_trim}\mbox{}\\
Define $G_{\theta, m}(Y)$ as the following function: 
$$G_{\theta, m}(Y) = \begin{cases} 
1 & \text{if } Y \geq F^{-1}_{Y \mid w<m, Z=0}(1-\theta) \\
0 & \text{if } Y < F^{-1}_{Y \mid w<m, Z=0}(1-\theta)
\end{cases},$$
where $F_{y \mid x}$ represents the population c.d.f. of $y$ given $x$ under the favorable situation and $m$ represents the trimming cutoff. Let $\Tilde{\Lambda}$ be any solution to the following estimating equation:
\begin{align*} 
\mathbb{E}&[w Y(0) \mid w < m, Z = 0] + \tau_{trim} = \\
& \underset{\theta \in [0,1]}{\sup} \frac{\Lambda\mathbb{E}\left[w Y(0) \cdot G_{\theta, m}(Y(0)) \mid w < m, Z=0 \right]  + \frac{1}{\Lambda}\mathbb{E}\left[ w Y(0) \cdot (1-G_{\theta, m}(Y(0)))\mid w < m, Z=0\right]}{ \Lambda\mathbb{E}\left[w \cdot G_{\theta, m}(Y(0)) \mid w < m, Z=0 \right] + \frac{1}{\Lambda}\mathbb{E}\left[w \cdot (1-G_{\theta, m}(Y(0))) \mid w < m, Z=0 \right]},
\end{align*}

\noindent where $\tau_{trim} = \mathbb{E}\left[Y(1) - Y(0) \mid Z = 1, w < m\right]$. Then $\widetilde{\Lambda}$ is the design sensitivity.

\end{corollary}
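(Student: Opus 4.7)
The plan is to adapt the proof of Theorem \ref{thm:design_sensitivity_msm} essentially verbatim, but with every sample average and expectation restricted to the trimmed control subpopulation $\{i : w_i < m\}$ and with the estimand replaced by $\tau_{trim}$. The trimmed weighted estimator has the form
\begin{equation*}
\hat\tau_{trim}(\hat w) = \frac{\sum_{i=1}^n Z_i Y_i \mathbbm{1}\{\hat w_i < m\}}{\sum_{i=1}^n Z_i \mathbbm{1}\{\hat w_i < m\}} - \frac{\sum_{i=1}^n \hat w_i Y_i (1-Z_i) \mathbbm{1}\{\hat w_i < m\}}{\sum_{i=1}^n \hat w_i (1-Z_i)\mathbbm{1}\{\hat w_i < m\}},
\end{equation*}
and I would reject the null $\tau_{trim} = 0$ under a sensitivity analysis at level $\Lambda$ exactly when the minimum over $\widetilde w \in \nu_{msm}(\Lambda, w)$ of $\hat\tau_{trim}(\widetilde w)$ is asymptotically positive. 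As in the proof of Theorem \ref{thm:design_sensitivity_msm}, the design sensitivity is the value of $\Lambda$ at which the limit of the treated-side average equals the limit of the worst-case control-side average.

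First I would compute the limit of the treated piece. By the law of large numbers, $\frac{\sum Z_i Y_i \mathbbm{1}\{\hat w_i < m\}}{\sum Z_i \mathbbm{1}\{\hat w_i < m\}} \to \mathbb{E}[Y(1) \mid Z=1, w<m]$ (using that $\hat w \cip w$ and that $w$ depends only on $X$ so the event $\{w<m\}$ is a pre-treatment event). Under the favorable situation (no unmeasured confounding), the standard weighting identity applied within the subpopulation $\{w<m\}$ yields $\mathbb{E}[Y(0) \mid Z=1, w<m] = \mathbb{E}[w Y(0) \mid Z=0, w<m]$, and hence the treated-side limit equals $\mathbb{E}[wY(0) \mid w<m, Z=0] + \tau_{trim}$, matching the left-hand side of the displayed equation.

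Next I would handle the worst-case control-side average. Applying Proposition 2 of \citet{zhao2019sensitivity} to the control units that survive trimming gives exactly the same cutoff-over-ordered-outcomes expression as in Theorem \ref{thm:design_sensitivity_msm}, but with the sums taken over $\{i : Z_i = 0, w_i < m\}$. The key technical step is to extend Lemma \ref{lemma:mu_0_lim} to this restricted regime, i.e., to show that each of the four empirical averages $\overline g_t(c)$ (built from the functions $g_1,\ldots,g_4$ multiplied by $\mathbbm{1}\{w_i < m\}$) converges uniformly in $c$ to the corresponding conditional expectation under $w<m$. Since multiplying every function in the class $\mathcal{F}$ by the bounded pre-treatment indicator $\mathbbm{1}\{w<m\}$ preserves every bracket and does not increase bracketing numbers (each $\epsilon$-bracket $[\ell,u]$ for $\mathcal{F}$ induces an $\epsilon$-bracket $[\ell \mathbbm{1}\{w<m\}, u \mathbbm{1}\{w<m\}]$ with the same or smaller $P|u-\ell|$), the Glivenko--Cantelli conclusion carries over, and uniform convergence then passes to the ratio by the continuous mapping theorem as in the original proof. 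Rewriting the resulting worst-case expression with $G_{\theta,m}$ replacing the indicator in terms of the conditional CDF $F_{Y(0) \mid w<m, Z=0}$ produces the right-hand side of the estimating equation.

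The main obstacle I anticipate is not the Glivenko--Cantelli step itself but correctly identifying the treated-side limit as $\mathbb{E}[wY(0) \mid w<m, Z=0] + \tau_{trim}$, because this requires that ignorability hold within the trimmed subpopulation and that $\mathbbm{1}\{\hat w_i < m\}$ behave well under $\hat w \cip w$ (which can fail if $w$ has a point mass at $m$). Under the mild regularity condition that $\Pr(w = m) = 0$, the indicator converges almost surely and the argument goes through; otherwise a tie-breaking convention or a smoothed trimming estimator (as mentioned in Section \ref{sec:des_trimming}) would be required, and the same proof strategy applies after replacing the hard indicator by its smooth analog.
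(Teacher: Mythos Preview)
Your proposal is correct and follows essentially the same approach as the paper, which simply states that the proof of Theorem \ref{thm:design_sensitivity_msm} carries over verbatim after removing units with $w_i \geq m$. You have merely spelled out the details of that restriction (and even flagged a regularity issue at $w=m$ that the paper does not address), so there is nothing to correct.
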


\begin{proof}
    The proof of Corollary \ref{corr:design_sensitivity_msm_trim} is equivalent to the proof of Theorem \ref{thm:design_sensitivity_msm} after removing units with $w_i \geq m$.
\end{proof}

\begin{corollary}[Design Sensitivity for the Marginal Sensitivity Model for Augmentation]\label{corr:design_sensitivity_msm_aug}\mbox{}\\
Define $\Tilde{\Lambda}$ as any solution to the following estimating equation (where $F_{y \mid x}$ represents the population cdf of $y$ given $x$ under the favorable situation):
\begin{align*} 
\mathbb{E}&[w e \mid Z = 0] + \tau \\
&= \underset{\theta \in [0,1]}{\sup} \frac{\Lambda\mathbb{E}\left[w e \mathbbm{1}\left\{e \geq F^{-1}_{e \mid Z=0}(1-\theta)\right\} \mid Z=0 \right]  + \frac{1}{\Lambda}\mathbb{E}\left[ w e \mathbbm{1}\left\{e < F^{-1}_{e \mid Z=0}(1-\theta)\right\}\mid Z=0\right]}{ \Lambda\mathbb{E}\left[w \mathbbm{1}\left\{e \geq F^{-1}_{e \mid Z=0}(1-\theta)\right\} \mid Z=0 \right] + \frac{1}{\Lambda}\mathbb{E}\left[w \mathbbm{1}\left\{e < F^{-1}_{e \mid Z=0}(1-\theta)\right\} \mid Z=0 \right]},
\end{align*}
\noindent where $e := Y - g(X)$ are the residuals from an arbitrary outcome model $g$. Then $\widetilde{\Lambda}$ is the design sensitivity.

\end{corollary}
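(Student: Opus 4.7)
The plan is to reduce this corollary directly to Theorem \ref{thm:design_sensitivity_msm} by observing that the augmented estimator is algebraically equivalent to the standard Hájek-weighted ATT estimator applied to the outcome residuals $e_i = Y_i - g(X_i)$. Rearranging the definition of $\hat\tau^{aug}(\hat w)$ gives
\[
\hat\tau^{aug}(\hat w) \;=\; \frac{1}{\sum_i Z_i}\sum_i Z_i e_i \;-\; \frac{\sum_i \hat w_i(1-Z_i)e_i}{\sum_i \hat w_i(1-Z_i)},
\]
so every step in the proof of Theorem \ref{thm:design_sensitivity_msm} can be mirrored with $Y_i$ replaced by $e_i$ throughout. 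In particular, Proposition 2 of \citet{zhao2019sensitivity} still describes the MSM-maximizer of the Hájek-weighted control mean of $e_i$, and a cutoff on residuals plays the role formerly played by the cutoff on $Y(0)$.

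I would then carry out the argument in three steps. First, I would note that rejection of the null at sample size $n$ is equivalent to $\frac{1}{n_1}\sum_i Z_i e_i > \max_{\tilde w \in \nu_{msm}(\Lambda,w)}\frac{\sum_i \tilde w_i(1-Z_i)e_i}{\sum_i \tilde w_i(1-Z_i)}$, so the design sensitivity is the value of $\Lambda$ at which the limits of these two quantities coincide. Second, I would invoke Lemma \ref{lemma:mu_0_lim} with $Y_i$ replaced by $e_i$ to evaluate the right-hand-side limit; this yields the supremum over $\theta$ of the ratio on the RHS of the stated estimating equation, with $G_\theta$ defined in terms of $F_{e \mid Z=0}$. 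Third, for the LHS limit I would use the favorable-situation identities $\E[w Y(0) \mid Z=0] = \E[Y(0)\mid Z=1]$ (valid since the population weights $w$ correctly balance potential outcomes under no unmeasured confounding) and $\E[w g(X) \mid Z=0] = \E[g(X)\mid Z=1]$ (standard moment balance for inverse-propensity weights on functions of $X$), which together yield
\[
\E[e \mid Z=1] \;=\; \E[Y(1)\mid Z=1] - \E[g(X)\mid Z=1] \;=\; \tau + \E[w e \mid Z=0],
\]
matching the stated LHS.

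The main obstacle is verifying that the bracketing construction underlying Lemma \ref{lemma:mu_0_lim} transfers cleanly when the outcome variable is the residual $e$ rather than $Y$. That argument bounds bracket masses via Cauchy--Schwarz using $\E[Y^2]$ and $\E[w^2]$ together with the conditional CDF of $Y$, and one must check that the analogous steps go through using $\E[e^2]$ and $F_{e \mid Z=0}$. This is a mild condition on the outcome model: provided $\E[g(X)^2] < \infty$, the triangle inequality gives $\E[e^2] < \infty$, the bracket-mass bounds hold with $Y$ replaced by $e$, and the Glivenko--Cantelli conclusion yields the uniform convergence needed to complete the proof. Apart from this bookkeeping on moments, no new analytic difficulty arises beyond the proof of Theorem \ref{thm:design_sensitivity_msm}.
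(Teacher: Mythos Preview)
Your proposal is correct and follows exactly the paper's approach: the paper's proof of this corollary is the one-line remark that it is ``equivalent to the proof of Theorem \ref{thm:design_sensitivity_msm} after replacing the outcomes with residuals,'' and you have simply spelled out the details of that replacement (the algebraic rewriting of $\hat\tau^{aug}$ as a Hájek estimator on residuals, the LHS identity via moment balance in the favorable situation, and the moment bookkeeping for Lemma \ref{lemma:mu_0_lim}). No additional ideas are needed.
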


\begin{proof}
    The proof of Corollary \ref{corr:design_sensitivity_msm_aug} is equivalent to the proof of Theorem \ref{thm:design_sensitivity_msm} after replacing the outcomes with residuals.
\end{proof}

\subsection{Theorem \ref{thm:aug} (Impact of Augmentation on Design Sensitivity)} 
Define $e := Y - g(X)$ as the residual from an arbitrary outcome model $g$ used to augment a weighted estimate. Then, for the variance-based sensitivity model, the design sensitivity from an augmented weighted estimators will be greater than the design sensitivity for a standard weighted estimator if the following holds: 
$$\var(e \mid Z = 0) \leq \frac{1-\cor(w, Y \mid Z = 0 )^2}{1-\cor(w, e \mid Z = 0)^2} \cdot \var(Y \mid Z = 0)$$
\begin{proof} 
To begin, we will first derive the design sensitivity for the variance-based sensitivity model for augmented weighted estimators. If we treat the model $g(X)$ as fixed, then it is simple to show that under the same regularity assumptions as the ones invoked in Theorem \ref{thm:design_sensitivity_vbm} (i.e., Assumption \ref{assump:regularity_conds}), $\sigma^{aug}_{\nu(R^2, w)} < \infty$. In particular, we can apply the same proof, but substitute the residuals for the outcomes. The regularity conditions effectively state that the fourth moment of the residuals must be finite. 

Following \cite{huang2022sensitivity} (Theorem 5.1), note that the maximum asymptotic bias that can occur for an augmented weighted estimator is: 
\begin{align*} 
\xi^{aug}_{\nu_{vbm}(R^2, w)} :&= \max_{\tilde w \in \nu_{vbm}(R^2, w)} \text{Bias}(\tau(w)^{aug} \mid \tilde w) \\
&= \sqrt{1-\cor(w, e \mid Z = 0)^2} \cdot \sqrt{\frac{R^2}{1-R^2} \cdot \var(w \mid Z = 0) \cdot \var(e \mid Z = 0)}.
\end{align*} 
Then, because $\sigma^{aug}_{\nu(R^2, w)} < \infty$, following Theorem \ref{thm:design_sensitivity_vbm}, the design sensitivity can be algebraically solved for: 
$$\tilde R^2_{aug} = \frac{b^2}{1+b^2} \text{ where } b^2 = \frac{1}{1-\cor(w, e \mid Z = 0)^2} \cdot \frac{\tau_{aug}^{2}}{\var(w \mid Z = 0) \cdot \var(e \mid Z = 0)}$$
To compare $\tilde R^2_{aug}$ and $\tilde R^2$, we can re-write $\tilde R^2_{aug}$ as follows:
\begin{align*} 
\tilde R^2_{aug} &= \frac{b^2}{1+b^2} \\
&= \frac{\frac{1}{1-\cor(w, e \mid Z = 0)^2} \cdot \frac{\tau_{aug}^{2}}{\var(w \mid Z = 0) \cdot \var(e \mid Z = 0)}}{1+\frac{1}{1-\cor(w, e \mid Z = 0)^2} \cdot \frac{\tau_{aug}^{2}}{\var(w \mid Z = 0) \cdot \var(e \mid Z = 0)}}\\
&= \frac{\tau_{aug}^{2}}{(1-\cor(w, e \mid Z = 0))^2 \cdot \var(w \mid Z = 0) \cdot \var(e \mid Z = 0) + \tau_{aug}^2}
\end{align*} 
\noindent Then:
\begin{align*} 
\frac{\tilde R^2_{aug}}{\tilde R^2} &= \frac{(1-\cor(w, Y \mid Z = 0)^{2}) \cdot \var(w\mid Z = 0) \cdot \var(Y\mid Z = 0) + \tau(w)^2}{(1-\cor(w, e\mid Z = 0)^{2}) \cdot \var(w \mid Z = 0) \cdot \var(e \mid Z = 0) + \tau_{aug}^{2}} \cdot \frac{\tau_{aug}^{2}}{ \tau(w)^2}\\
&= \frac{ \tau^2_{aug}}{ \tau(w)^2} \cdot \frac{(1-\cor(w, Y \mid Z = 0)^2) \cdot \var(w) \cdot \var(Y) + \tau(w)^2}{(1-\cor(w, e\mid Z = 0)^2) \cdot \var(w) \cdot \var(e) + \tau^2_{aug}}
\end{align*} 
Because we are in the favorable setting, in which there is no omitted confounding, the weighted estimator will recover the estimand (i.e., the ATT) consistently (i.e., $\hat \tau(\hat w) \cip \tau(w) \equiv \tau$). Similarly, because the augmented weighted estimator is doubly robust, augmenting will also recover the estimand consistently (i.e., $\hat \tau_{aug} \cip \tau_{aug} \equiv \tau$), regardless of the outcome model. Thus, $\tau(w) = \tau_{aug}$. Then, the above is greater than 1 if the following criteria holds: 
\begin{align*} 
(1-\cor(w, e)^2) \cdot \var(e \mid Z = 0) &\leq (1-\cor(w, Y \mid Z = 0)^2) \cdot \var(Y \mid Z = 0) \\
\var(e \mid Z = 0) &\leq \frac{1-\cor(w, Y \mid Z = 0)^2}{1-\cor(w, e\mid Z = 0)^2} \cdot \var(Y \mid Z = 0) 
\end{align*} 
\end{proof} 

\subsection{Theorem \ref{thm:trim} (Impact of Trimming Weights on Design Sensitivity)}
Define some cutoff $m$ such that weights above the cutoff are trimmed. Furthermore, assume the trimmed weights are centered at mean 1 and the projection of the trimmed, ideal weights are centered on the trimmed, estimated weights. Then, for the variance-based sensitivity model, if the following holds: 
\begin{equation*} 
\underbrace{\frac{\var(w \mid w < m, Z = 0)}{\var(w\mid Z =0 )}}_{(1) \text{Variance reduction in } w} \leq \underbrace{\frac{1-\cor(w, Y \mid Z = 0)^2}{1-\cor(w, Y \mid w < m, Z = 0)^2}}_{(2) \text{ Change in relationship between } w \text{ and } Y} \cdot \underbrace{\frac{\var(Y \mid Z = 0)}{\var(Y \mid w < m, Z = 0)}}_{(3) \text{Variance reduction in } Y},
\end{equation*} 
the design sensitivity from a trimmed estimator will be greater than the standard weighted estimator. 
\begin{proof} 
Like Theorem \ref{thm:aug}, we will begin by deriving the design sensitivity for weighted estimators with trimming, under the variance-based sensitivity model. Furthermore, we have assumed that the trimmed weights are centered at mean 1: $\E(w \mid w < m) = 1$. This assumption  holds by construction whenever researchers normalize the trimmed weights appropriately; for more discussion of such normalization and its implications see the remark that follows this proof.

We will begin by deriving the maximum asymptotic bias for a trimmed weighted estimator. To begin, define the cutoff for trimming to be some threshold $m$, such that any observations $w \geq m$ are trimmed. Then, the estimand of interest is thus the average treatment effect, across the treated, subset to units that associated with weights $w < m$: 
\begin{align*} 
\tau^{trim} :&= \E(Y(1) - Y(0) \mid Z = 1, X \in \mathcal{A})\\
&\equiv \E(Y(1) - Y(0) \mid Z = 1, w < m)
\end{align*} 
Notably, the additional condition of $w < m$ is an observable condition, given the observed covariates $X$ (as the estimated weights are a function of $X$). The absolute population bias for a trimmed weighted estimator is: 
\begin{align*} 
&\Big| \tau_{trim}(w) - \tau_{trim} \Big|\\
&\text{where $\tau_{trim}(w) =  E(Y|Z=1, w < m)-\E(wY |Z=0, w < m)$. By conditional ignorability:} \nonumber \\
&=\left| \E(w Y \mid Z = 0, w < m) - \E(w^* Y \mid Z = 0, w < m) \right|\nonumber \\
&=\left| \E((w - w^*) \cdot Y \mid Z = 0, w < m) \right|\nonumber \\
&\text{By construction, $\E(w \mid Z = 0, w < m) = \E(w^* \mid Z = 0, w < m)$:} \nonumber \\
&= \left|\E((w - w^*) \cdot Y \mid Z = 0, w < m) - \E(w - w^* \mid Z = 0, w < m) \cdot \E(Y \mid Z = 0, w < m) \right|\nonumber \\
&=\left|\cov(w - w^*, Y \mid Z = 0, w < m) \right|\nonumber \\
&= \left|\cor(w - w^*, Y \mid Z = 0, w < m)\right| \cdot \sqrt{\var(w - w^* \mid Z = 0, w < m) \cdot \var(Y \mid Z = 0, w < m)}\\
&= \left|\cor(w - w^*, Y \mid Z = 0, w < m)\right| \cdot \sqrt{\var(w \mid Z = 0, w < m) \cdot \frac{R^2}{1-R^2} \var(Y \mid Z = 0, w < m)}
\end{align*} 
The last line follows from the fact the projection of the trimmed, ideal weights in the observed covariate space of $X$ are centered on the trimmed, estimated weights. For intuition, first define an indicator $V := \1\{ w > m\}$. Then, the trimmed, ideal weights can be written as $w^* \cdot V$. Similarly, the trimmed, estimated weights can be written as $w \cdot V$. Then, if the projection of the ideal weights in $X$ are centered on $w$ (a condition that trivially is met when using inverse propensity score weights), it follows immediately that $\E(w^* \cdot V \mid X) = w \cdot V$. As a result, within the space of $w < m$, the residual error (i.e., $w^* - w$) is orthogonal to the estimated weights $w$.

To bound the correlation term, we apply the recursive formula of partial correlation:
\begin{align*} 
-\sqrt{1-\cor(w, Y \mid Z = 0, w < m)^2} &\leq \cor(w - w^*, Y \mid Z = 0, w < m) \\
&\leq \sqrt{1-\cor(w, Y \mid Z = 0, w < m)^2}
\end{align*} 

Then, the maximum asymptotic bias for a trimmed weighted estimator is: 
\begin{align} 
\xi^{trim}_{\sigma(R^2, w)} :=& \max_{\tilde w \in \sigma(R^2)} \text{Bias}(\hat \tau(\hat w)^{trim}) \nonumber \\
=& \sqrt{1-\cor(w, Y \mid Z = 0, w < m)^2} \cdot \nonumber \\
&~~\sqrt{\frac{R^2}{1-R^2} \cdot \var(w \mid Z = 0, w < m) \cdot \var(Y \mid Z = 0, w < m)}.
\label{eqn:trim_bias_bound} 
\end{align} 
\noindent Solving for the $R^2$ such that $\xi^{trim}_{\sigma(R^2, w)} = \tau(w)$, 
$$\tilde R^2_{trim} = \frac{c^2}{1+c^2},$$
where 
$$c^2 := \frac{1}{1-\cor(w, Y \mid Z = 0, w < m)^2} \cdot \frac{\tau(w)^{2trim}}{\var(w \mid w < m, Z = 0) \cdot \var(Y \mid w < m, Z = 0)}.$$
Then, 
\begin{align*} 
&\frac{\tilde R^2_{trim}}{\tilde R^2} =\\
&\frac{(1-\cor(w, Y \mid Z = 0)^2) \cdot \var(w\mid Z = 0) \cdot \var(Y \mid Z = 0) + \tau(w)^2}{(1-\cor(w, Y \mid Z = 0, w < m)^2) \cdot \var(w\mid w < m,Z=0) \cdot \var(Y \mid w < m, Z = 0) + \tau(w)^{2trim}} 
\cdot \frac{\tau(w)^{2trim}}{\tau(w)^2}\\
&= \frac{\tau(w)^{2trim}}{\tau(w)^2} \frac{(1-\cor(w, Y\mid Z = 0)^2) \cdot \var(w\mid Z = 0) \cdot \var(Y\mid Z = 0) + \tau(w)^2}{(1-\cor(w, Y \mid Z = 0, w < m)^2) \cdot \var(w \mid w < m, Z = 0) \cdot \var(Y \mid w < m, Z = 0) + \tau(w)^{2trim}}
\intertext{Let $\tau(w)^{trim} := \tau(w) + c$:}
&= \frac{(\tau(w) + c)^{2}}{\tau(w)^2} \cdot \\
&\frac{(1-\cor(w, Y \mid Z = 0)^2) \cdot \var(w \mid Z = 0) \cdot \var(Y \mid Z = 0) +\tau(w)^2}{(1-\cor(w, Y \mid Z = 0, w < m)^2) \var(w\mid Z =0, w < m) \var(Y \mid Z = 0, w < m) + (\tau(w)+c)^{2}}.
\end{align*} 
In order for there to be an improvement in design sensitivity from trimming, the following must hold: 
\begin{align*} 
\frac{(\tau(w) + c)^{2}}{\tau(w)^2} &\cdot \left((1-\cor(w, Y \mid Z = 0)^2) \cdot \var(w \mid Z = 0) \cdot \var(Y \mid Z = 0) +\tau(w)^2 \right) \geq \\
(1-\cor&(w, Y \mid Z = 0, w < m)^2) \cdot \var(w \mid Z = 0, w < m) \cdot \var(Y \mid w < m, Z = 0)\\ 
+ (\tau(w)&+c)^{2} 
\end{align*} 
Re-arranging:
$$\frac{\var(w \mid Z = 0)}{\var(w \mid Z = 0, w < m)} \geq \frac{1-\cor(w, Y \mid Z = 0, w < m)}{1-\cor(w, Y \mid Z = 0)^2} \cdot \left( \frac{\tau(w)}{\tau(w) + c}\right)^2 \cdot \frac{\var(Y \mid Z = 0, w < m)}{\var(Y\mid Z = 0)}$$

\noindent Because we are assuming a constant treatment effect, $c=0$, which allows us to arrive at Equation \eqref{eqn:trimming_criteria}: 
$$\frac{\var(w \mid Z = 0)}{\var(w\mid Z=0, w < m)} \geq \frac{1-\cor(w, Y \mid Z = 0, w < m)}{1-\cor(w, Y \mid Z = 0)^2} \cdot \frac{\var(Y \mid Z = 0, w < m)}{\var(Y\mid Z = 0)}$$

The results of Theorem \ref{thm:trim} may be easily extended in settings when researchers are interested in a smoothed trimmed estimator. Due to the non-smoothness of trimming, traditional trimming methods ignore the uncertainty in the design stage from estimating weights and conduct inference excluding units with extreme estimated weights. \citet{yang2018asymptotic} develop a smooth trimming estimator that weights all units continuously, assigning extremely small weights to units with large weights instead of removing them, and is asymptotically linear. Therefore, the bootstrap can be used to construct confidence intervals. Design sensitivity considers the large sample limits of the weights, so design stage uncertainty is not present. Furthermore, in asymptotic settings, the smoothed and non-smoothed trimmed estimators are equivalent. 

However, in settings when researchers are interested in calculating the power of a sensitivity analysis, in addition to design sensitivity, it can be helpful to consider the smoothed trimmed estimator. In particular, the (standard) trimmed estimator is non-smooth, and as a result, will not be amenable to a bootstrap-style procedure to estimating power \citep{yang2018asymptotic}. Following \cite{yang2018asymptotic}, we define a smoothed trimmed estimator, denoted as $\hat \tau(\hat w)^{smooth}$, which approximates the trimmed estimator arbitrarily well using a tuning parameter $\epsilon > 0$. Applying Theorem \ref{thm:power}, the power of a sensitivity analysis for $\hat \tau(\hat w)^{smooth}$ is as follows: 
\begin{align} 
\Pr \Big(& \frac{\sqrt{n}(\hat \tau(\hat w)^{smooth} - \xi_{\nu(\Gamma, w)}^{smooth}}{\sigma_{\nu(\Gamma, w)}^{smooth}} \geq k_\alpha \Big) \nonumber \\
&=  \Pr\left( \frac{\sqrt{n} \cdot ( \hat \tau^{smooth}(w) - \tau^{trim})}{ \sigma^{smooth}_{W}} \geq \frac{k_\alpha \cdot \sigma^{smooth}_{\nu(\Gamma, w)} + \sqrt{n} \cdot (\xi^{smooth}_{\nu(\Gamma, w)}-\tau^{trim} )}{\sigma^{smooth}_{W}}\right) \nonumber \\
&= \Pr\left( \frac{\sqrt{n} \cdot ( \hat \tau^{smooth}(w) - \tau^{trim})}{ \sigma^{smooth}_{W}} \geq \frac{k_\alpha \cdot \sigma^{smooth}_{\nu(\Gamma, w)} + \sqrt{n} \cdot (\xi^{trim}_{\nu(\Gamma, w)} + \delta -\tau^{trim} )}{\sigma^{smooth}_{W}}\right) \label{eqn:delta_smooth} \\
&\to 1 - \Phi \left( \frac{k_\alpha \cdot \sigma^{smooth}_{\nu(\Gamma, w)} + \sqrt{n} \cdot (\xi^{trim}_{\nu(\Gamma, w)} + \delta - \tau^{trim})}{\sigma^{smooth}_{W}} \right),\nonumber  
\end{align} 
where $\delta$ is a function of $\epsilon$. The expression from Equation \eqref{eqn:delta_smooth}, which introduces the $\delta$ constant follows directly from the fact that we may express the bias for the smoothed trimming estimator as a function of the bias of a standard (non-smooth) trimmed estimator and a function of $\epsilon$: 
\begin{align*} 
\text{Bias}(\hat \tau(w)^{smooth}) &= \E(\hat \tau(w)^{smooth}) - \tau^{trim} \\
&= \underbrace{ \E(\hat \tau(w)^{smooth}) - \E(\hat \tau(w)^{trim})}_{(*)} + \underbrace{\E(\hat \tau(w)^{trim}) -  \tau^{trim}}_{\equiv \text{Bias}(\hat \tau(w)^{trim})} 
\end{align*} 
The first term (denoted by $(*)$) is equal to $\delta$, which can be made arbitrarily large or small by tuning $\epsilon > 0$:
\begin{align*} 
 \E(&\hat \tau(w)^{smooth}) - \E(\hat \tau(w)^{trim}) \\
 =& \E \left(\frac{1}{\sum_{i=1}^n (1-Z) w V'} \sum_{i=1}^n (1-Z)w \cdot V'
Y  \right) - \\
&\E \left( \frac{1}{\sum_{i=1}^n (1-Z) w \1\{w < m\}} \sum_{i=1}^n (1-Z)w \cdot \1\{w < m\} Y \right) \\
=& \E(w (V' - \1\{w < m\}) Y \mid Z = 0) \\ 
=& \underbrace{\E(w (V' - \1\{w < m\}) Y \mid Z = 0,w < m) P(w < m \mid Z = 0)}_{= 0} + \\
&\E(w (V' - \1\{w < m\}) Y \mid Z = 0,w \geq m) P(w \geq m \mid Z = 0)  \\ 
:=& \varepsilon \cdot \E(w Y \mid Z = 0,w \geq m) P(w \geq m \mid Z = 0)  \\
\equiv& \delta 
 \end{align*} 

As such, the bias bound of the smoothed trimmed estimator can be written as the bias bound in Equation \eqref{eqn:trim_bias_bound} and an arbitrary constant $\delta$, which is a function of $\epsilon$: 
\begin{align*} 
\xi^{smooth}_{\sigma(R^2, w)} &= \max_{\tilde w \in \nu(\Gamma, w)} \text{Bias}(\hat \tau(w)^{smooth})\\
&= \max_{\tilde w \in \nu(\Gamma, w)} \text{Bias}(\hat \tau(w)^{trim}) + \delta \\ 
&= \xi_{\nu(\Gamma, w)}^{trim} + \delta
\end{align*} 
 
Therefore, we have shown that as $n \to \infty$, for an arbitrarily small $\epsilon > 0$, the design sensitivity will be within a $\delta$-neighborhood of the value $\tilde \Gamma$, for which $\xi^{trim}_{\nu(\Gamma, w)} = \tau(w)$.  
\end{proof}

\textbf{Remark:} Consider computing weights $w^{raw}$, trimming these weights at an upper threshold $m$, and then renormalizing the remaining weights by their sum, producing a final set of weights $w$.  Renormalized weights tend to have many attractive properties \citep{hajek1971comment}, and Theorem \ref{thm:trim} assumes such weights are used after trimming.  The trimmed estimand, defined originally by the condition $w^{raw} < m$, can be described in terms of the new weights $w$ as follows:
\[
E[Y(1) - Y(0)\mid w^{raw} < m] = E\left[Y(1) - Y(0)\left| w < \frac{m}{\mu_w}\right.\right],
\]
where $\mu_w = E(w^{raw})$ is the population normalization term.  As such, we can describe any analysis based on trimming raw weights equivalently as an analysis based on trimming on normalized post-trimming weights, although with a slightly different trimming threshold.

\section{Using a Planning Sample to Estimate Design Sensitivity}\label{app:planning_sample} 
\subsection{Calibrating Design Sensitivity to Outcome Data with a Planning Sample}
To estimate design sensitivity, researchers must posit an outcome model. One way to calibrate their priors to the existing outcome data is to utilize a planning sample. This is done by holding out part of the sample to use as a `planning sample' (akin to a pilot sample in experimental studies). The remainder of the sample is then used for the analysis. We will refer to the holdout sample as the `analysis sample.' We will assume in this section that researchers are randomly sampling observations from a fixed dataset to construct a planning sample. However, in cases when researchers have access to auxiliary outcome data (i.e., historical datasets), they may utilize these external datasets as the planning sample, and treat the full observational study as the analysis sample.

We will outline two approaches that researchers may use to estimate design sensitivity using a planning sample. The first approach proposes drawing a planning sample, and simply estimating the design sensitivity across the planning sample. Table \ref{tbl:plan_sample} outlines in more detail. 

\begin{table}[!ht]
\centering
\caption{\underline{\textbf{Estimating Design Sensitivity using a Planning Sample}}} \label{tbl:plan_sample}
\noindent\fbox{%
\vspace{2mm}
\parbox{0.95\textwidth}{%
\vspace{2mm}
\begin{Step} 
\item Fix an effect size $\tau$.
\item Estimate the weights $\hat w_i$ using the full sample.  
\item Across the units in the control group, generate a planning sample by randomly sampling $n_{plan}$ observations. Denote the set of indices that correspond to the planning sample as $\mathcal{P}$.
\item  If using the variance-based sensitivity model: 
\begin{itemize} 
\item[a.] Calculate the sample variance of the outcomes (i.e., $\widehat{\var}(Y_i \mid i \in \mathcal{P})$) and the sample correlation between the outcomes and the estimated weights (i.e., $\widehat{\cor}(\hat w_i, Y_i \mid i \in \mathcal{P})$). 
\item[b.] Calculate the sample variance of the estimated weights across the full sample.
\end{itemize} 
If using the marginal sensitivity model:
\begin{itemize} 
\item[a.] Calculate the weighted average of the control outcomes in the planning sample:
$$\hat \mu_0^{plan} \leftarrow \frac{\sum_{i:i \in \mathcal{P}} \hat{w}_i Y_i}{\sum_{i:i \in \mathcal{P}} \hat{w}_i}.$$
\item[b.] Generate the average treatment outcome, given a fixed $\tau$: 
$$\hat \mu_1 \leftarrow \tau +  \frac{\sum_{i:i \in \mathcal{P}} \hat{w}_i Y_i}{\sum_{i:i \in \mathcal{P}} \hat{w}_i}.$$
\end{itemize} 
\item Using the components generated in Step 4, estimate the design sensitivities under the variance-based sensitivity model using Theorem \ref{thm:design_sensitivity_vbm} and the marginal sensitivity model using Theorem \ref{thm:design_sensitivity_msm}.
\end{Step} 
}
}
\end{table} 

The approach outlined in Table \ref{tbl:plan_sample} allows researchers to calibrate the quantities needed to calculate design sensitivity using a planning sample. However, in settings where the outcome distribution may be heavy tailed, the planning sample may be unable to capture the full complexity present in the outcomes, which can result in an over-estimation of design sensitivity. In particular, this is of concern to the marginal sensitivity models, in which robustness to unmeasured confounding and the design sensitivity are often characterized by outliers \citep{huang2022variance}. One alternative way to leverage a planning sample, but additionally account for more complex variation across the full dataset is by first fitting an outcome model across the planning sample units, and use this model to simulate outcomes across the units in the analysis sample. Table \ref{tbl:plan_sample_sim} summarizes the procedure.

\begin{table} 
\centering 
\caption{\underline{\textbf{Estimating Design Sensitivity using a Planning Sample and Simulated Outcomes}}} \label{tbl:plan_sample_sim}
\noindent\fbox{%
\vspace{2mm}
\parbox{0.95\textwidth}{%
\vspace{2mm}
\begin{Step} 
\item Fix an effect size $\tau$. 
\item Estimate the weights $\hat w_i$ using the full sample. 
\item Across the units in the control group, generate a planning sample by randomly sampling $n_{plan}$ observations. Denote the set of indices that correspond to the planning sample as $\mathcal{P}$.
\item Across the planning sample $\mathcal{P}$, fit an outcome model $\hat g_\mathcal{P}(X_i)$ for the control units. Use the residuals from the fitted outcome model to estimate the variance in the residuals (i.e., unexplained variation in the outcomes, denoted as $\hat \sigma^2_{e, \text{plan}}$). 
\item Simulate new data for the units in the analysis sample (i.e., $i \not \in \mathcal{P}$) by parametrically sampling residuals from $\hat g_\mathcal{P}(X_i)$: 
\begin{enumerate} 
        \item[a.] Randomly sample units across the analysis sample, with replacement. We refer to this as the \textit{bootstrap sample}, $\mathcal{B}$. 
        \item[b.] For all units in $\mathcal{B}$, estimate the outcome $Y^*_i(0)$: 
        $$Y^*(0) \leftarrow \hat g_\mathcal{P}(X_i) + \epsilon^*,$$
        where $\epsilon^*_i \sim N(0, \hat{\sigma}^2_{e,\text{plan}})$.
\end{enumerate}
\item Apply Steps 4 and 5 from Table \ref{tbl:plan_sample}, but using $\mathcal{B}$ instead of the planning sample $\mathcal{P}$.
\end{Step} 
}
}
\end{table} 

The procedure outlined in Table \ref{tbl:plan_sample_sim} allows researchers to flexibly calibrate design sensitivity using a planning sample. The fitted outcome model can be of arbitrary specification, and researchers can leverage flexible, black box machine learning models to estimate $Y_i(0)$. To simulate the noise $\epsilon^*_i$ in Step 5-(b), we currently assume the errors are normally distributed. However, researchers may relax this assumption and posit any, arbitrary distribution for the residuals, using the residuals across the planning sample to calibrate the necessary parameters for the distribution.

To illustrate the proposed procedure, we turn to the empirical application. For simplicity, we will focus on the setting in which researchers are interested in the impact of presidential support on support for the FARC peace deal. We draw 100 different planning samples from the data, and use Table \ref{tbl:plan_sample_sim} to estimate the design sensitivity for a variety of different effect sizes across both the variance-based and marginal sensitivity models. Figure \ref{fig:pilot_sample} provides a visualization for the distribution of estimated design sensitivities across the different planning samples. The estimated design sensitivities using the planning sample are mostly centered around the oracle design sensitivities, calibrated using the full dataset. \\

\noindent \textbf{Remark on Sample Boundedness.} From Figure \ref{fig:pilot_sample}, we see that as the effect size increases, the spread of estimated design sensitivities under the marginal sensitivity model from the planning samples also increases. This is likely because the marginal sensitivity model is susceptible to sample boundedness. More specifically, \cite{huang2022variance} highlighted that because of the inherent stabilization within the model, the marginal sensitivity model can only recover a worst-case bias bound defined by the range of observed control outcomes. As a result, as the effect size increases towards the sample bounds, $\Lambda \to \infty$. Because the range of observed control outcomes depends on the observed data, variation in the drawn planning sample can drive variation in the estimated design sensitivity for the marginal sensitivity model. The variance-based sensitivity model is not susceptible to sample boundedness; as a result, the estimated design sensitivities across different planning samples remain relatively stable, even as the effect size increases. \\

Both approaches proposed in the following subsection provide researchers with a way to estimate design sensitivity using a planning sample. We have also illustrated that the design sensitivities estimated from a planning sample are similar to the design sensitivities estimated using the full dataset. However, what is arguably most important in practice is that the \textit{relative} estimates of design sensitivity from different design choices, such as using trimmed weights or an augmented weighted estimator, stably inform the optimal design choice. See Section \ref{sec:power_farc} for more details.

\begin{figure}[!t]
\centering 
\textbf{Distribution of Design Sensitivities, with 100 Planning Samples} \\ \vspace{2mm} 
\includegraphics[width=0.49\textwidth]{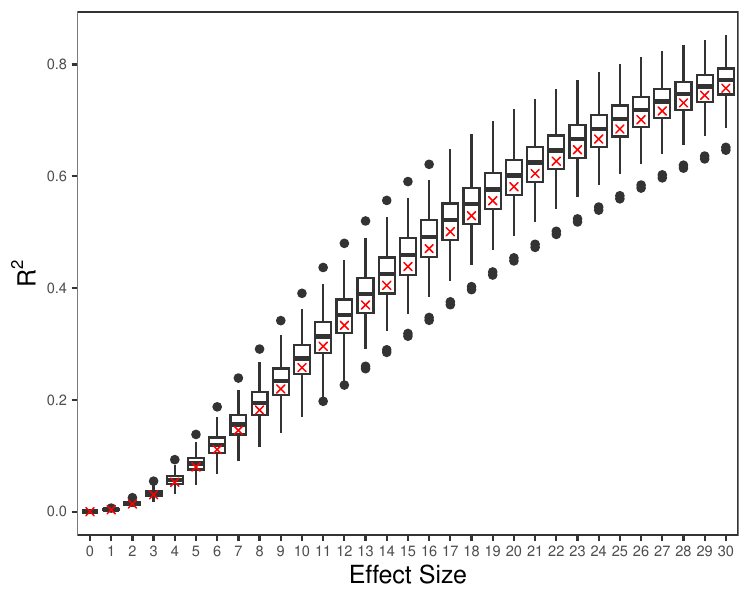}
\includegraphics[width=0.49\textwidth]{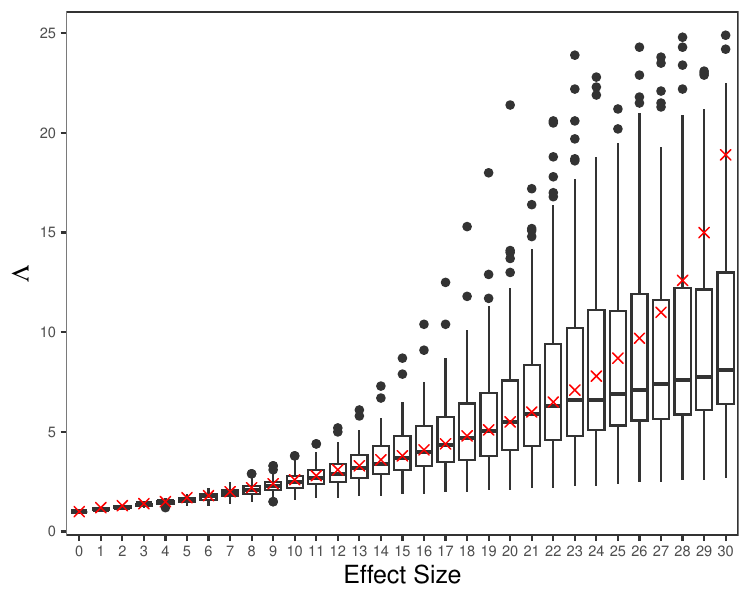}
\caption{Distribution of estimated design sensitivity measures, across 100 different planning samples. Design sensitivities were estimated using the procedure outlined in Table \ref{tbl:plan_sample_sim}. The red $\times$ points denote the oracle design sensitivities, calibrated using the full dataset.}
\label{fig:pilot_sample} 
\end{figure}

\subsection{Estimating Design Sensitivity for Augmentation, without Prior Specification of Outcome Model} 
We consider two settings for which researchers may estimate design sensitivity for augmentation. First, we consider the setting in which researchers are assessing whether or not they should fit an outcome model to begin with to perform augmentation. To estimate the impact of augmentation on design sensitivity without a prior specification of an outcome model, researchers can instead construct a \textit{proxy outcome model} that explains a fixed $r^2$ of the variation in the outcomes. They can then estimate the design sensitivity. If they find that an extremely high $r^2$ value is needed---i.e., they must explain a large percentage of the variation in the outcomes---for improvements in design sensitivity, this can be infeasible to do in practice, and choose to not augment their estimation.  Table \ref{tbl:aug_proxy} summarizes the procedure.

To make this more concrete, we can consider the empirical application. Assume first that researchers are interested in optimizing for robustness with respect to an average error (i.e., the variance-based sensitivity model). From Figure \ref{fig:farc}, we see that even if researchers were to estimate an outcome model that could explain 50\% of the variation, the design sensitivity for the variance-based sensitivity model would only improve slightly. In contrast, if researchers are worried about potential, worst-case confounding (i.e., the marginal sensitivity model), we see that if they were to augment the weighted estimator with an outcome model that could explain even only 25\% of the variation in the outcomes, the design sensitivity would improve substantially, especially in cases where the treatment effect might be relatively small.

In the second setting, researchers already have estimated an outcome model of interest \textit{a priori}. Then, design sensitivity can be estimated in the same manner as in the standard case. However, instead of calibrating to the outcome distribution, researchers must calibrate design sensitivity to the underlying residuals. 

It is worth noting that design sensitivity is usually not hurt from augmenting with an outcome model. However, employing a proxy outcome model first can help researchers determine if it is worth fitting a complex outcome model, and also better assess practical trade-offs, like whether to gather more covariate data that could feasibly help explain variation in the outcomes.

\begin{table}[!ht]
\caption{\textbf{\underline{Design Sensitivity for Augmentation with a Proxy Outcome Model}}} \label{tbl:aug_proxy} 
\centering
\noindent\fbox{%
\vspace{2mm}
\parbox{0.95\textwidth}{%
\vspace{2mm}
\begin{Step}
\item Set some $r^2$, which represents the variation explained in the outcome model. 
\item Draw a planning sample. 
\item Estimate a model $\hat m$ across the planning sample to generate outcomes.
\item Generate an outcome across the rest of the sample not in the planning sample. Add noise to each prediction to guarantee that the outcomes are the same variance as the outcomes in the planning sample. Denote this as $\tilde Y$. 
\item Construct a proxy outcome model $ g^*$ that can explain $r^2$ of the outcomes.
\begin{enumerate} 
\item Generate a scaled version of the outcomes: 
$$Z^* := (Y^* - \E(Y^*))/sd(Y^*)$$
\item Create a proxy covariate $X^*$ that is correlated with the generated outcomes: 
$$X^* := \sqrt{r^2} \cdot Z^* + \sqrt{1-r^2} \cdot v,$$
where $v$ is a standard normal random variable. 
\item Estimate a linear model with the proxy covariate on the outcomes
$$Y^*\sim X^*$$
\item From the linear model fit in the previous step, use the fitted values $\hat{{Y}^*}$ and the residuals $e^*$.
\end{enumerate} 
\item Estimate the design sensitivity. 
\end{Step}
}
}
\end{table} 

\clearpage 
\section{Detailed Simulation Results} \label{app:sim}

\subsection{Simulation Parameters} 
The simulation setup is under a \textit{favorable situation} defined as:
\begin{enumerate}
    \item The study is free of unmeasured bias. In the case of the marginal sensitivity model, the marginal sensitivity model is satisfied with $\Lambda = 1$. In the case of the variance-based sensitivity model, $R^2 = 0$. Equivalently, for both sensitivity models $w_i^* = w_i$.
    \item The null hypothesis of no treatment effect is false, and a specific alternative is true. In our case, the alternative we consider is that the data is generated by a stochastic model with a treatment effect as follows:
    \begin{align}
      P(Z_i = 1 \mid X_i) \propto \frac{\exp(\beta_\pi X_i)}{1+\exp(\beta_\pi X_i)}, \ \ \ \ \ \ Y_i = \beta_y X_i + \tau Z_i + u_i, \label{eq:sim_dgp}
    \end{align}
where $X_i \iid N(\mu_x, \sigma^2_x)$, and $u_i \iid N(0, \sigma^2_y)$.

\end{enumerate}
 We vary the different parameters, $\{\beta_\pi, \beta_y, \tau, \sigma_y \}$. We vary $\tau$ to control the effect size, $\beta_\pi$ for the variance in the weights, and $\beta_y$ and $\sigma_y$ to alter the variance in the outcomes. The correlation between the weights and the outcome depends on $\beta_\pi$, $\beta_y$, and $\sigma_y$. The base parameters are set as follows: $\tau = 1$, $\mu_x = 0$, $\sigma_x = 1$, $\beta_y = 1$, $\sigma_y = 1$, $\beta_\pi = 1$. For now, we will assume a constant treatment effect $\tau$ for all units $i$. However, we relax this assumption in Appendix \ref{subsec:het} and allow for heterogenous treatment effects. 

\subsection{Drivers of Design Sensitivity} 
To simulate the drivers of design sensitivity for each sensitivity model, we modify $\{\beta_\pi, \beta_y, \tau, \sigma_y\}$ such that one element of the data generating process changes, while holding the others constant. We are essentially estimating the derivative of the design sensitivity, with respect to the effect size, the variance in the estimated weights, the variance in the outcomes, and the correlation between the estimated weights and the outcome. Simulation results are presented in both Table \ref{tab:drivers} and Figure \ref{fig:ex_1}.

\begin{table}[ht]
 \caption{\label{tab:drivers} Drivers of design sensitivity}
\centering
\begin{tabular}{cccccccccc}
 \toprule 
 \multicolumn{4}{c}{Simulation Parameters} & \\
$\tau$ & $\beta_y$ & $\beta_\pi$ & $\sigma_y$ & $\var(Y \mid Z = 0)$ & $\var(w \mid Z = 0)$ & $\cor(w, Y \mid Z = 0)$ & $\Tilde{\Lambda}$ & $\Tilde{R}^2$ \\
  \midrule
  \multicolumn{5}{l}{Effect Size} \\ \midrule
0.25 &  1 & 1 & 1 & 1.83 & 1.28 & 0.54 & 1.27 & 0.04 \\ 
  0.50 &  1 & 1 & 1 & 1.82 & 1.30 & 0.54 & 1.59 & 0.13 \\ 
  0.75 &  1 & 1 & 1 & 1.83 & 1.35 & 0.53 & 2.01 & 0.24 \\ 
  1.00 &  1 & 1 & 1 & 1.82 & 1.29 & 0.54 & 2.55 & 0.37 \\ 
  1.25 &  1 & 1 & 1 & 1.83 & 1.26 & 0.54 & 3.27 & 0.49 \\ 
  1.50 &  1 & 1 & 1 & 1.83 & 1.30 & 0.54 & 4.16 & 0.57 \\ 
  1.75 &  1 & 1 & 1 & 1.83 & 1.29 & 0.54 & 5.35 & 0.64 \\ 
  2.00 &  1 & 1 & 1 & 1.83 & 1.35 & 0.53 & 7.02 & 0.70 \\ 
   \midrule
   \multicolumn{5}{l}{Variance in Outcomes} \\ \midrule
   1.00 & 0.5 & 1 & 0.5 &  0.46 & 1.31 & 0.54 & 6.89 & 0.70 \\ 
  1.00 &  1.0 & 1 & 1.0 & 1.83 & 1.30 & 0.54 & 2.53 & 0.37 \\ 
  1.00 & 1.5 & 1 & 1.5 & 4.11 & 1.31 & 0.53 & 1.86 & 0.20 \\ 
  1.00 & 2.0 & 1 & 2.0 & 7.33 & 1.29 & 0.54 & 1.58 & 0.13 \\ 
  1.00 & 2.5 & 1 & 2.5 & 11.42 & 1.29 & 0.54 & 1.45 & 0.09 \\ \midrule
   \multicolumn{5}{l}{Variance in Weights}\\\midrule
   1.00 & 0.80 & 0.50 & 1.11 & 1.84 & 0.27 & 0.54 & 2.55 & 0.74 \\ 
  1.00 & 0.88 & 0.75 & 1.07 & 1.83 & 0.66 & 0.54 & 2.56 & 0.54 \\ 
  1.00 & 1.00 & 1.00 & 1.00 & 1.83 & 1.32 & 0.53 & 2.53 & 0.37 \\ 
  1.00 & 1.17 & 1.25 & 0.88 & 1.82 & 2.30 & 0.54 & 2.54 & 0.25 \\ 
  1.00 & 1.40 & 1.50 & 0.65 & 1.83 & 4.30 & 0.53 & 2.50 & 0.15 \\ 
  1.00 & 1.64 & 1.75 & 0.14 & 1.83 & 6.52 & 0.54 & 2.60 & 0.11 \\ \midrule
   \multicolumn{5}{l}{Correlation between Weights and Outcomes}\\\midrule
   1.00 & -1.4 & 1 & 0.46 & 1.83 & 1.29 & -0.75 & 2.56 & 0.49 \\ 
  1.00 & -1.2 & 1 &  0.80 & 1.83 & 1.31 & -0.64 & 2.56 & 0.41 \\ 
  1.00 & -1.0 & 1 & 1.00 & 1.82 & 1.32 & -0.53 & 2.55 & 0.37 \\ 
  1.00 & -0.8 & 1 & 1.14 & 1.83 & 1.26 & -0.43 & 2.55 & 0.34 \\ 
  1.00 & -0.6 & 1 & 1.24 & 1.83 & 1.27 & -0.32 & 2.55 & 0.32 \\ 
  1.00 & -0.4 & 1 & 1.30 & 1.83 & 1.27 & -0.21 & 2.56 & 0.31 \\ 
  1.00 & -0.2 & 1 & 1.34 & 1.83 & 1.31 & -0.11 & 2.56 & 0.30 \\ 
  1.00 & 0.0 & 1 & 1.35 & 1.83 & 1.29 & 0.00 & 2.56 & 0.30 \\ 
  1.00 & 0.2 & 1 & 1.34 & 1.83 & 1.32 & 0.11 & 2.56 & 0.29 \\ 
  1.00 & 0.4 & 1 & 1.30 & 1.83 & 1.27 & 0.22 & 2.56 & 0.31 \\ 
  1.00 & 0.6 & 1 & 1.24 & 1.83 & 1.30 & 0.32 & 2.55 & 0.32 \\ 
  1.00 & 0.8 & 1 & 1.14 & 1.83 & 1.29 & 0.43 & 2.56 & 0.34 \\ 
  1.00 & 1.0 & 1 & 1.00 & 1.83 & 1.31 & 0.53 & 2.52 & 0.36 \\ 
  1.00 & 1.2 & 1 & 0.80 & 1.82 & 1.28 & 0.64 & 2.57 & 0.42 \\ 
  1.00 & 1.4 & 1 & 0.46 & 1.83 & 1.31 & 0.75 & 2.53 & 0.49 \\ \bottomrule
\end{tabular}
\end{table}

\subsection{Treatment Effect Heterogeneity} \label{subsec:het} 

To allow for treatment effect heterogeneity, we modify the data generating process under the favorable situation \eqref{eq:sim_dgp} so that it allows for the individual treatment effect for unit $i$ to depend on its covariate value $X_i$:
\begin{align}
    Y_i = \beta_y X_i + \tau_i Z_i + u_i, \ \ \ \ \ \ \tau_i = \tau_0 + \beta_\tau X_i. \label{eq:sim_hetero}
\end{align}
In this setup, $\beta_\tau$ controls the degree of treatment effect heterogeneity. As $\beta_\tau$ increases in magnitude, the individual treatment effects depend more on the covariate values, while $\beta_\tau = 0$ recovers the constant effects case. When $\beta_\tau$ and $\beta_\pi$ are the same sign, the weights and individual treatment effects are positively correlated; otherwise, they are negatively correlated.

As discussed in Section \ref{sec:des_trimming}, the impact of trimming on design sensitivity can depend on the correlation between the weights and the treatment effects. For positive correlation, trimming units with large weights also removes units with larger treatment effects, reducing the ATT and thus the design sensitivity. The reverse is true when the weights and effect sizes are negatively correlated.

The simulations presented in Table \ref{tab:hetero_sims} examine the impact of trimming on design sensitivity for varying levels of treatment effect heterogeneity. In the constant effects case with $\beta_\tau = 0$, trimming improves design sensitivity for both the variance-based and marginal sensitivity models. As $\beta_\tau$ decreases, trimming increases the treatment effect and thus increases the design sensitivity compared to trimming with constant effects. Conversely, the treatment effect and design sensitivity decrease as $\beta_\tau$ increases, eventually causing trimming to hurt design sensitivity compared to not trimming at all. Higher levels of effect heterogeneity are required for trimming to hurt design sensitivity for the variance-based sensitivity model than the marginal sensitivity model.

\begin{table}[!ht]
\centering 
\textbf{Impact of trimming on design sensitivity for varying treatment effect heterogeneity} \\ \vspace{2mm} 
\begin{tabular}{lccccccccc}
\toprule 
$\beta_\tau$ & Trimmed ATT & $\tilde \Lambda$ & $\tilde \Lambda_{trim}$ & Change & $\tilde R^2$ & $\tilde R^2_{trim}$ & Change \\ \midrule 
-1.5 & 2.64 &  6.10 & 15.34 & 9.24 & 0.42 & 0.73 & 0.31 \\
-1.0 & 2.49 &  6.55 & 12.86 & 6.31 & 0.44 & 0.71 & 0.27 \\
-0.5 & 2.36 & 6.38 & 10.80 & 4.42 &0.44 & 0.69 & 0.24 \\
0.0 & 2.23 &  6.22 & 9.19 & 2.97 & 0.43&  0.66 & 0.23 \\
0.5 & 2.10 &  6.55 & 8.01 & 1.46 & 0.44 & 0.63 & 0.19 \\
1.0 & 1.96 &  6.24 & 6.82 & 0.58 & 0.44 & 0.60 & 0.16 \\
1.5 & 1.83 &  6.27 & 5.89 & -0.38 & 0.43 & 0.57 & 0.14 \\
2.0 & 1.70 &  6.06 & 5.08 & -0.98 & 0.43 & 0.53 & 0.10 \\
3.0 & 1.43 &  6.47 & 3.81 & -2.66 & 0.44 & 0.44 & -0.00 \\
4.0 & 1.17 &  6.16 & 2.97 & -3.19 & 0.44 & 0.35 & -0.09 \\ 
\bottomrule 
\end{tabular}
\caption{We vary the amount of treatment effect heterogeneity and assess the impact of trimming on design sensitivity. For the simulation, we set the average treatment effect to be equal to 2.23, $\var(Y \mid Z = 0) = 2.48$, $\var(Y \mid Z = 0, w<m) = 2.4$, $\cor(w, Y \mid Z = 0) = 0.5$, $\cor(w, Y \mid Z = 0, w < m) = 0.6$.} 
\label{tab:hetero_sims}
\end{table}

\subsection{Assessing Augmentation under Model Misspecification}

Theorem \ref{thm:aug} establishes the conditions under which outcome model augmentation improves design sensitivity compared to a standard weighted estimator for the variance-based sensitivity model. We now evaluate the impact of model specification on design sensitivity through simulation. The results are displayed in Table \ref{tab:aug}. In line with the data generating process in the favorable situation \eqref{eq:sim_dgp}, the outcome $Y$ is modelled as a linear function of $X$ for the correctly specified case. We also consider several model misspecifications, with each misspecified model replacing $X$ with $W$. We consider a noise model with $W_i \overset{iid}{\sim} N\left(0, 1\right)$, $W_i = X_i^3$ for misspecification 1, $W_i = \exp\left\{X_i / 2\right\}$ for misspecification 2, and $W_i = \log\left(X_i^4\right)$ for misspecification 3.

The correctly specified model yields the largest improvements in design sensitivity compared to weighting alone for both sensitivity models. The improvement in design sensitivity stems from the reduction in the variance of the outcome from augmentation, which can be seen by comparing the variance of $Y$ to the variance of the residual $e$, where the residual $e$ plays the role of a pseudo outcome for the augmented weighted estimator. The design sensitivities are unchanged for the noise model; however, it is not advisable to implement this model in practice since it could lead to less precise estimates in a finite-sample. While performing augmentation with the first two misspecified models does not help design sensitivity as much as with the correctly specified model, both models result in higher design sensitivities than the standard weighted estimator, highlighting that even misspecified outcome models could lead to improvements. On the other hand, the third misspecified model yields lower design sensitivity values for the marginal sensitivity model and only minor improvements for the variance-based sensitivity model.

\begin{landscape} 

\begin{table}[!ht]
\footnotesize
\centering
\textbf{Impact of augmentation on design sensitivity under outcome model misspecification} \\ \vspace{2mm} 
\begin{tabular}{lcccccccccc}
\toprule 
& \multicolumn{2}{c}{Standard IPW} & \multicolumn{2}{c}{Augmented IPW} \\
$\sigma^2_y$ & $\cor(w, Y \mid Z = 0)$ & $\var(Y \mid Z = 0)$ & $\cor(e,Y \mid Z = 0)$ & $\var(e \mid Z = 0)$ & $\tilde \Lambda$ & $\tilde \Lambda_{aug}$ & Change & $\tilde R^2$ & $\tilde R^2_{aug}$ & Change \\ \midrule
\multicolumn{3}{l}{Outcome Model Type: Correct} \\ \midrule 
0.50 & 0.70 & 1.08 & -0.00 & 0.25 & 6.60 & 87.64 & 81.04 & 0.76 & 0.87 & 0.12 \\
1.00 & 0.54 & 1.84 & 0.00 & 1.01 & 4.15 & 7.18 & 3.03 & 0.57 & 0.63 & 0.06 \\
1.50 & 0.42 & 3.08 & 0.00 & 2.25 & 2.97 & 3.59 & 0.62 & 0.41 & 0.44 & 0.03 \\
2.00 & 0.33 & 4.84 & 0.00 & 4.01 & 2.37 & 2.58 & 0.21 & 0.28 & 0.30 & 0.01 \\
3.00 & 0.23 & 9.81 & -0.00 & 8.97 & 1.83 & 1.88 & 0.05 & 0.16  & 0.16 & 0.00 \\
\midrule
\multicolumn{3}{l}{Outcome Model Type: Noise} \\\midrule 
0.50 & 0.70 & 1.08 & 0.70 & 1.08 & 6.50 & 6.50 & 0.00 & 0.76 & 0.76 & 0.00 \\
1.00 & 0.54 & 1.83 & 0.54 & 1.83 & 4.18 & 4.18 & 0.00 & 0.58 & 0.58 & 0.00 \\
1.50 & 0.41 & 3.08 & 0.41 & 3.08 & 2.94 & 2.94 & 0.00 & 0.40 & 0.40 & -0.00 \\
2.00 & 0.33 & 4.84 & 0.33 & 4.84 & 2.36 & 2.36 & 0.00 & 0.28 & 0.28 & -0.00 \\
3.00 & 0.23 & 9.79 & 0.23 & 9.79 & 1.84 & 1.84 & 0.00 & 0.16 & 0.16 & 0.00 \\
\midrule
\multicolumn{3}{l}{Outcome Model Type: Misspecification 1} \\\midrule 
0.50 & 0.69 & 1.08 & 0.46 & 0.61 & 6.46 & 22.90 & 16.44 & 0.75 & 0.78 & 0.03 \\
1.00 & 0.54 & 1.83 & 0.31 & 1.37 & 4.17 & 5.62 & 1.45 & 0.57 & 0.59 & 0.01 \\
1.50 & 0.41 & 3.09 & 0.23 & 2.62 & 2.95 & 3.28 & 0.33 & 0.40 & 0.41 & 0.01 \\
2.00 & 0.33 & 4.82 & 0.17 & 4.35 & 2.36 & 2.49 & 0.13 & 0.28 & 0.29 & 0.01 \\
3.00 & 0.23 & 9.82 & 0.12 & 9.34 & 1.83 & 1.86 & 0.03 & 0.16 & 0.16 & 0.00 \\
\midrule
\multicolumn{3}{l}{Outcome Model Type: Misspecification 2} \\\midrule 
0.50 & 0.69 & 1.08 & -0.15 & 0.34 & 6.43 & 45.31 & 38.88 & 0.75 & 0.84 & 0.09 \\
1.00 & 0.54 & 1.83 & -0.08 & 1.08 & 4.12 & 6.27 & 2.15 & 0.57 & 0.62 & 0.05 \\
1.50 & 0.41 & 3.10 & -0.06 & 2.35 & 2.98 & 3.41 & 0.43 & 0.40 & 0.43 & 0.02 \\
2.00 & 0.33 & 4.84 & -0.04 & 4.09 & 2.36 & 2.53 & 0.17 & 0.29 & 0.30 & 0.01 \\
3.00 & 0.23 & 9.86 & -0.03 & 9.12 & 1.84 & 1.87 & 0.03 & 0.16 & 0.16 & 0.00 \\
\midrule
\multicolumn{3}{l}{Outcome Model Type: Misspecification 3} \\\midrule 
0.50 & 0.70 & 1.08 & 0.74 & 0.96 & 6.62 & 4.67 & -1.95 & 0.76 & 0.80 & 0.04 \\
1.00 & 0.53 & 1.83 & 0.55 & 1.71 & 4.15 & 3.50 & -0.65 & 0.56 & 0.59 & 0.02 \\
1.50 & 0.42 & 3.08 & 0.42 & 2.96 & 2.96 & 2.74 & -0.22 & 0.40 & 0.41 & 0.01 \\
2.00 & 0.33 & 4.83 & 0.34 & 4.71 & 2.37 & 2.28 & -0.09 & 0.28 & 0.29 & 0.01 \\
3.00 & 0.23 & 9.80 & 0.23 & 9.68 & 1.84 & 1.82 & -0.02 & 0.16 & 0.16 & 0.00\\ \bottomrule 
\end{tabular}
\caption{For the correctly specified outcome model, we model the control potential outcome $Y(0)$ as a linear function of $X$. For other outcome models, we replace $X$ with $W$, where $W_i \overset{iid}{\sim} N(0, 1)$ for the noise model, $W_i = X_i^3$ for misspecification 1, $W_i = \exp\left\{X_i / 2\right\}$ for misspecification 2, and $W_i = \log\left(X_i^4\right)$ for misspecification 3.}
\label{tab:aug} 
\end{table}
\end{landscape} 
\endgroup 
\end{document}